\newcommand{\commentout}[1]{}
\newcommand{\eat}[1]{}
\newcommand{\topic}[1]{\vspace{5pt}\noindent{{\bf #1 :}}}
\newcommand{\dealnum}{M} 
\newcommand{\slotnum}{K} 
\newcommand{\deal}{\mathbb{C}} 
\newcommand{\profit}{p} 
\newcommand{\target}{\mathbf{N}}
\newcommand{\N}{\mathbf{N}} 
\newcommand{\alloc}{\mathbf{x}} 
\newcommand{\Nsum}{|\mathbf{N}|} 
\newcommand{\calA}{{\mathcal A}}
\newcommand{\calB}{{\mathcal B}}
\newcommand{\bfx}{{\mathbf x}}
\newcommand{\bfy}{{\mathbf y}}
\newcommand{\poly}{\mathrm{poly}}
\newcommand{\OPT}{\mathcal{OPT}}
\renewcommand{\P}{\mathbf{P}}
\newcommand{\NP}{\mathbf{NP}}
\newcommand{\GAP}{\textsf{GAP}} 
\begin{document}

\title{Optimal Groupon Allocations}
\author{
Weihao Kong \affil{Shanghai Jiao Tong University}
Jian Li \affil{IIIS, Tsinghua University}
Tao Qin \affil{Microsoft Research Asia}
Tie-Yan Liu \affil{Microsoft Research Asia}
}

\begin{abstract}
Group-buying websites represented by Groupon.com are very popular in electronic commerce and online shopping nowadays.
They have multiple slots to provide deals with significant discounts to their visitors every day.
The current user traffic allocation mostly relies on human decisions.
We study the problem of automatically allocating the user traffic of a group-buying website to different deals to maximize the
total revenue and refer to it as the Group-buying Allocation Problem (\GAP).
The key challenge of \GAP\ is how to handle the tipping point (lower bound) and the purchase limit ( upper bound) of each deal. We formulate \GAP\ as a knapsack-like problem with variable-sized items and majorization constraints.
Our main results for \GAP\ can be summarized as follows.
(1) We first show that for a special case of \GAP, in which the lower bound equals the upper bound for each deal,
 there is a simple dynamic programming-based algorithm that can find an optimal allocation in pseudo-polynomial time.
(2) The general case of \GAP\ is much more difficult than the special case.
To solve the problem, we first discover several structural properties of the optimal allocation, %
and then design a two-layer dynamic programming-based algorithm leveraging those properties.
This algorithm can find an optimal allocation in pseudo-polynomial time.
(3) We convert the two-layer dynamic programming based algorithm to a fully polynomial time approximation scheme (FPTAS),
using the technique developed in~\cite{ibarra1975fast}, combined with some careful modifications of the dynamic programs.
Besides these results,
we further investigate some natural generalizations of \GAP,
and propose effective algorithms.
\end{abstract}

\category{G.2.1}{DISCRETE MATHEMATICS}{Combinatorial algorithms}
\category{F.2.1}{Analysis of Algorithms and Problem Complexity}{Numerical Algorithms and Problems}

\terms{Design, Algorithms, Performance}

\keywords{Group-buying Allocation Problem, Dynamic Programming, Approximation Algorithms}

\begin{bottomstuff}
\end{bottomstuff}
\maketitle

 \section{Introduction}
 \label{sec:introduction}
Nowadays, group-buying websites are very popular in electronic commerce and online shopping.  They provide users with multiple deals from merchants with significant discounts every day. For each deal, a price after discount, a tipping point (lower bound), a purchase limit (upper bound), and a deadline are specified. Only when the number of purchases exceeds the tipping point is the deal on. A deal is closed once its deadline or purchase limit is reached.

\begin{figure*}[ht]
\centering
\subfigure[Featured deal]{
   \includegraphics[scale =0.35] {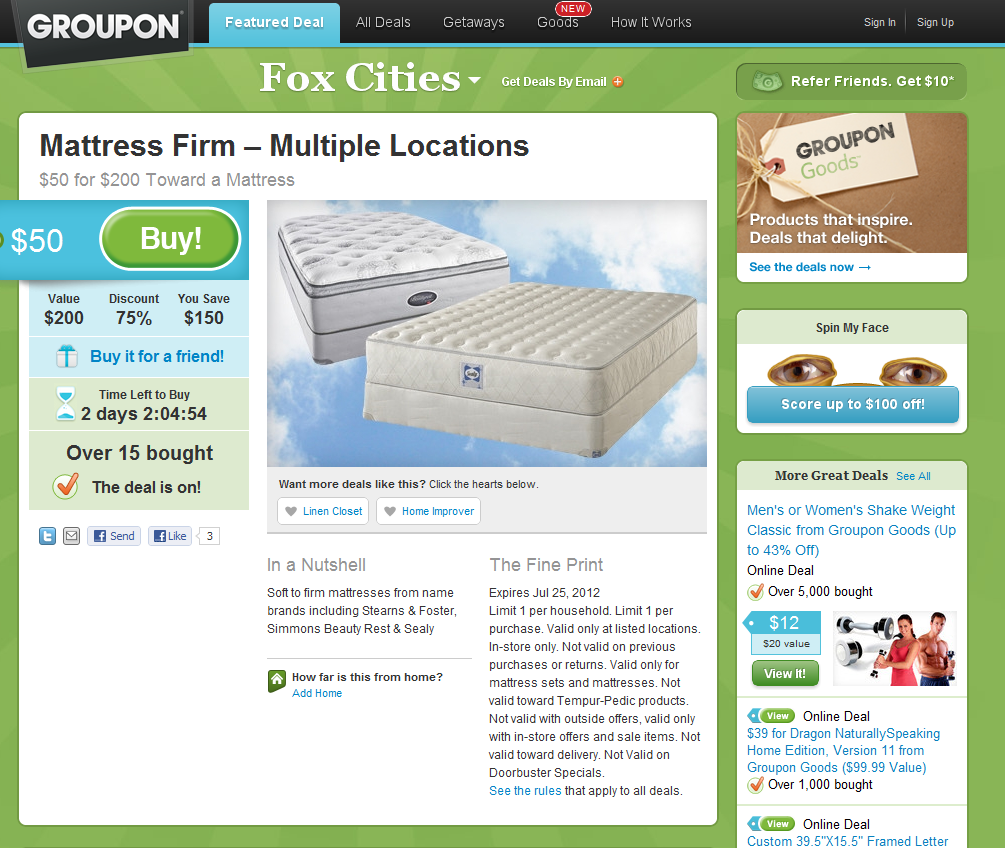}
   \label{fig:feature}
 } \qquad
 \subfigure[Sidebar deals]{
   \includegraphics[scale =0.4] {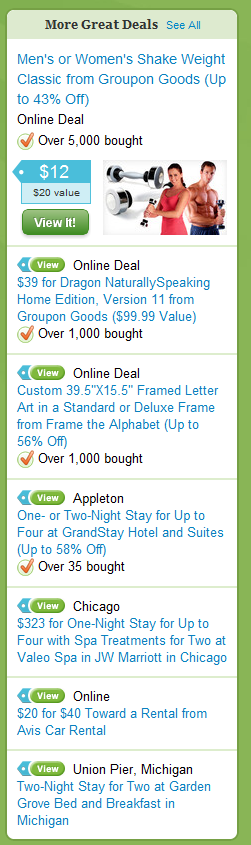}
   \label{fig:sidebar}
 }
\label{fig:sample}
\caption{Example deals from Groupon.com}
\end{figure*}


A group-buying website usually show a featured deal at the main position (see Figure \ref{fig:feature}) of its webpage, and several other deals at the sidebar positions (see Figure \ref{fig:sidebar}). As far as we know, the current practice in the group-buying websites heavily relies on human decisions to lay out those deals: to select one deal as the featured deal and to rank the other deals at sidebar positions. Once the decisions are made, the allocation of deals will be fixed and all the visitors from the same location  (e.g., Boston) will see the same set of deals and the same layout of those deals. We call such a strategy \emph{fixed allocation}. Fixed allocations clearly require a significant amount of human efforts but may still result in suboptimal allocations of user traffic. For example, it is not rare that a visitor comes to the website and finds the featured deal has been sold out. If we can replace this featured deal by another available deal, we can make better use of the website traffic.
That is, if a group-buying website shows different deals to different visitors automatically, it should be able to get better revenue.
We call such a strategy \emph{automatic allocation}.

In this paper, we study how to make automatic allocation of deals for a group-buying website to maximize its revenue. We call such a problem the group-buying Allocation Problem (\GAP), and give it a knapsack-like formulation in Section~\ref{sec:formulation}. \GAP\ is a difficult task because of the following two factors.
\begin{enumerate}
\item We need to display multiple deals, one at each slot, for one visitor, and one deal cannot be shown at more than one slots for one visitor.
The seemingly naive constraints combined with the fact that different slot positions have different conversion biases
directly translate to the majorization constraints (see Section~\ref{sec:constraint}) which are challenging to deal with.
\item Each deal has both a tipping point and a purchase limit. To make money from one deal, the group-buying website must ensure the deal achieves the tipping point.
This corresponds to a lower bound constraint if we decide to allocate some traffic to a particular deal. To maximize revenue, the group-buying website needs to ensure the traffic allocated to a deal will not go beyond its purchase limit. This corresponds to an upper bound constraint.
 \end{enumerate}


\eat{
\topic{Our results}
The major contributions of our paper can be summarized as follows.
\begin{itemize}
\item
We formulate group-buying allocation as the {\em knapsack problem with majorization constraints}. The new knapsack problem has several key characteristics: (1) xxx (2) xxx.
\item We start with a simple case of the knapsack problem: the lower bound of each class of items equals the upper bound. We propose a dynamic programming solution which can find an optimal allocation with a pseudo polynomial complexity.
\item We then consider the general case of the knapsack problem: the lower bound is smaller than the upper bound. We present an algorithm that can find an optimal allocation in pseudo polynomial time based on dynamic programming. We further convert the algorithm to a PTAS algorithm.
\item We propose two DP algorithms with polynomial time for a specific setting and the general setting of the knapsack problems to find an optimal allocation. Our algorithms can be applied to real-world applications and help group-buying websites to maximize revenue.
\end{itemize}
}

Now, we introduce the group-buying allocation problem (\GAP) and give it a knapsack-like formulation.

\subsection{Group-buying Allocation Problem}

\label{sec:formulation}

Suppose that a GS website has $\dealnum$ candidate deals
for a given period (e.g., the coming week) and $N$ website visitors during the time period. Here, for simplicity, we make two assumptions: (1) the number of visitors can be forecasted and is known to the GS website;\footnote{It is an important issue to make accurate forecast of future user traffic, but it is not within the scope of this paper.} and (2) all the deals arrive at the same time and have the same deadline.\footnote{We leave the case of deals with different arriving times and deadlines to the future work.}

For an individual deal $i$, the following information is known to the GS website for allocation.
\begin{itemize}
\item The original price $w_i$ describes the price of the item without the discount associated with the deal.
\item The discount $d_i$ can be specified in terms of a percentage of the original price. A visitor of the website can purchase the deal with price $w_id_i$.
\item The tipping point $L_i$ describes the minimum number of purchases that users are required to make in order for the discount of the deal to be invoked; otherwise, the deal fails and no one can get the deal and the discount.
\item The purchase limit $U_i$ denotes the maximal service capability of a merchant, and at most $u_i$ visitors can purchase this deal. For example, a restaurant can serve at most 200 customers during the lunch time.
\item $s_i$ is the percentage of revenue share that the GS website can get from each transaction. That is, for each purchase of the deal, the revenue of the GS website is $w_id_is_i$. For example, Groupon.com usually gets 30\%-50\% from each transaction.
Note that the GS website can get the revenue from a deal $i$ only if it is on (i.e., at least $L_i$ purchases are achieved).
\item $\lambda_i$ is the conversion probability for the $i$-th deal (i.e., the likelihood that a website visitor will purchase this deal). We assume that $\lambda_i$ is known to the GS website in advance for simplicity.\footnote{It is actually a separate line of research how to predict this conversion probability \cite{wu2009predicting,yuan2011predicting}. We do not consider it in this paper.}
\end{itemize}

Suppose that the GS website shows $K$ deals at $K$ slots to each website visitor. Without loss of generality, we assume that the $i$-th slot is better than the $j$-th slot if $i<j$. For example, the first slot is used to show the featured deal; the second slot corresponds to the first position at the sidebar; the third slot corresponds to the second position at the sidebar, so on and so forth.  We use $\gamma_k$ to denote the conversion bias carried by each position. Similar to the position bias of click probability in search advertising \cite{edelman2005internet, aggarwal2006truthful}, we have that
$$1\geq \gamma_1 > \gamma_2 > ... > \gamma_K\geq 0.$$
We use $N_k$ to indicate the number of {\em effective impressions} for slot position $k$:
$$N_k=N\gamma_k.$$
We have
$N_1> N_2 > \cdots > N_K$.
For simplicity and without much loss of accuracy, we assume that $N_k$ is an integer.
We can think an effective impression as an group-buying deal that actually catches some user's attention.
Note here we assume that the number of purchase of deal $i$ is the effective impressions allocated to deal $i$
times the conversion probability $\lambda_i$.
Let $p_i$ denote the expected revenue that the GS website can get from one effective impression of the $i$-th deal if the deal is on:
\begin{equation}\label{eq:w}
p_i=w_id_is_i\lambda_i.
\end{equation}

We use an integer vector $\alloc$ to denote an allocation, where the $i$-th element
$\alloc_i$ denotes the number of effective impressions allocated to the $i$-th deal.
For any vector $\alloc=\{\alloc_1,\alloc_2,\ldots,\alloc_n\}$, let
$\alloc_{[1]}\geq \alloc_{[2]}\geq \ldots \alloc_{[n]}$ denotes the components
of $\alloc$ in nonincreasing order
(ties are broken in an arbitrary but fixed manner).
Since there are multiple candidate deals and multiple slots,
we need to ensure the feasibility of an allocation. An allocation $\alloc$ is {\em feasible} if it satisfies that:
\begin{enumerate}
\item no more than one deal is assigned to a slot for any visitor, and
\item no deal is assigned to more than one slot for any visitor.
\end{enumerate}
We can easily see that the above constraints are essentially the same in preemptive scheduling of independent tasks on uniform machines. Consider $\dealnum$ jobs with processing requirement $\alloc_i(i=1,\ldots,\dealnum)$ to be processed on $\slotnum$ parallel uniform machines with different speeds $N_j (j =1,\ldots,\slotnum)$. Execution of job $i$ on machine $j$ requires $\alloc_i/N_j$ time units. $\alloc$ is a feasible allocation if and only if the minimum makespan of the preemptive scheduling problem is smaller or equal to $1$. According to \cite{brucker2007scheduling}, the sufficient and necessary conditions for processing all jobs in the interval [0,1] are
\begin{equation}\label{eq:fea1}
\frac{\sum_{j=1}^{\dealnum}\alloc_{[j]}}{ \sum_{j=1}^{\slotnum} N_j}\leq 1,\end{equation}
and
\begin{equation}\label{eq:fea2}
\frac{\sum_{j=1}^i \alloc_{[j]}}{\sum_{j=1}^i N_j}\leq 1 \text{ , for all } i\leq \slotnum .
\end{equation}
Thus, a vector $\alloc$ is a feasible allocation for $\GAP$ if it satisfies the inequalities in Eqn.~(\ref{eq:fea1}) and (\ref{eq:fea2}).

\subsection{Problem Formulation} \label{sec:constraint}

Based on the above notations, finding an optimal allocation means solving the following optimization problem.
\begin{align*}
         &\max_{\alloc}\sum_{i=1}^M p_i \alloc_i\\
s.t.\quad\quad& \frac{L_i}{\lambda_i}\leq \alloc_i\leq \frac{U_i}{\lambda_i} \quad \text{or} \quad \alloc_i=0,  \text{ for } i=1,2,...M\\
         &\alloc \text{ is a feasible allocation.}\\
\end{align*}
Note that $\alloc$ is a vector of integers throughout this paper, and we do not explicitly add it as a constraint when the context is clear. The first set of constraints says the actual number of purchase for deal $i$, which is $\lambda_i \alloc_i$,
should be between the lower bound $L_i$ and upper bound $U_i$. For simplicity, we denote $l_i=\frac{L_i}{\lambda_i}$ and $u_i=\frac{U_i}{\lambda_i}$ in the following sections.

Further, we note that the feasibility conditions in Eqn.~(\ref{eq:fea1}) and (\ref{eq:fea2}) can be exactly described by the majorization constraints.
\begin{definition}\textbf{Majorization constraints}\\
The vector $\bfx$ is {\em majorized}
\footnote{
In fact, the most rigorous term used here should be ``sub-majorize" in
mathematics and theoretical computer science literature
(see e.g.,~\cite{nielsen2002introduction, goel2006simultaneous}).
Without causing any confusion, we omit the prefix for simplicity.
}
by vector $\bfy$
(denoted as $\bfx\preceq \bfy$)
if the sum of the largest $i$ entries in $\bfx$
is no larger than the sum of the largest $i$ entries in $\bfy$
for all $i$, i.e.,
\begin{align}
\label{eq:maj}
\sum_{j=1}^i \bfx_{[j]} \leq \sum_{j=1}^i \bfy_{[j]}.
\end{align}
\end{definition}
In the above definition, $\bfx$ and $\bfy$ should contain the same number of elements. In Eqn.~(\ref{eq:fea1}) and (\ref{eq:fea2}), $N$ has less elements than $\alloc$; one can simply add $\dealnum-\slotnum$ zeros into $N$ (i.e., $N_{[i]}=0, \forall \slotnum<i\leq \dealnum$).

Now we are ready to abstract \GAP\ as an combinatorial optimization problem as the following.
\begin{definition} \textbf{Problem formulation for \GAP}\\
There are $\dealnum$ class of items, $\deal_1,\ldots, \deal_\dealnum$.
Each class $\deal_i$ is associated with a lower bound $l_i\in \mathbb{Z}^+$
and upper bound $u_i\in \mathbb{Z}^+$.
Each item of $\deal_i$ has a profit $\profit_i$.
We are also given a vector $\N=\{N_1,N_2,\ldots, N_\slotnum\}$,
called the {\em target vector}, where $N_1> N_2 > \cdots > N_K$.
We use $\Nsum$ to denote $\sum_{i=1}^{\slotnum}N_j$.
Our goal is to choose $\alloc_i$ items from class $\deal_i$ for each $i\in [\dealnum]$
such that the following three properties hold:
\begin{enumerate}
\item
Either $\alloc_i=0$ (we do not choose any item of class $\deal_i$ at all)
or $l_i\leq \alloc_i\leq u_i$ (the number of items of class $\deal_i$ must satisfy
both the lower and upper bounds);
\item
The vector $\alloc=\{\alloc_i\}_{i}$ is majorized by the target vector $\N$ (i.e., $\alloc\preceq \N$);
\item
The total profit of chosen items is maximized.
\end{enumerate}
\end{definition}

\topic{Our results}
Our major results for \GAP\ can be summarized as follows.
\begin{itemize}
\item[1.] (Section~\ref{sec:simplecase})
As a warmup, we start with a special case of the \GAP\ problem: the lower bound of each class of items equals the upper bound.
In this case, we can order the classes by decreasing lower bounds and the order
enables us to design a nature dynamic programming-based algorithm
which can find an optimal allocation in pseudo-polynomial running time.
\item[2.] (Section~\ref{sec:general})
We then consider the general case of the \GAP\ problem where the lower bound can be smaller than the upper bound.
The general case is considerably more difficult than the simple case in that there is no natural order
to process the classes. Hence, it is not clear how to extend the previous dynamic program to the general case.
To handle this difficulty,  we discover several useful structural properties of the optimal allocation.
In particular, we can show that the optimal allocation can be decomposed into multiple {\em blocks},
each of them has at most one {\em fractional} class (the number of allocated items for the class is less than the upper bound
and larger than the lower bound). Moreover, in a block, we can determine for each class except the fractional class,
whether the allocated number should be the upper bound or the lower bound.
Hence, within each block, we can reduce the problem to the
simpler case where the lower bound of every item equals the upper bound (with slight modifications).
We still need a higher level dynamic program to assemble the blocks and need to show that
no two different blocks use items from the same class.
 Our two level dynamic programming-based algorithm can find an optimal allocation in pseudo-polynomial time.
\item[3.] (Section~\ref{sec:fptas})
Using the technique developed in~\cite{ibarra1975fast}, combined with some careful modifications,
we can further convert the pseudo-polynomial time dynamic program to a fully polynomial time approximation scheme (FPTAS).
We say there is an FPTAS for the problem, if for any fixed constant $\epsilon>0$,
we can find a solution with profit at least $(1-\epsilon)\OPT$ in
$\poly(\dealnum, \slotnum, \log \Nsum, 1/\epsilon )$ time (See e.g., \cite{vazirani2004approximation}).
\item[4.] (Appendix~\ref{app:mono}) We consider the generalization
 from the strict decreasing target vector (i.e., $N_1> N_2 > \cdots > N_K$) to the non-increasing target vector (i.e., $N_1\geq N_2\geq \ldots \geq N_\slotnum$), and briefly describe a pseudo-polynomial time
 dynamic programming-based algorithm for this setting based on the algorithm in Section \ref{sec:general}.
\item[5.] (Appendix~\ref{app:approximation})
For theoretical completeness, we consider
for a generalization of \GAP\
where the target vector $\N=\{N_1,\ldots, N_\slotnum\}$ may be non-monotone.
We provide a $\frac{1}{2}-\epsilon$ factor approximation algorithm for any constant $\epsilon>0$.
In this algorithm, we use somewhat different techniques to handle the majorization constraints,
which may be useful in other variants of \GAP.
\end{itemize}

\section{Related Work}
\label{sec:relatedwork}

\subsection{Studies about Group-buying}
\cite{byers2012groupon} studies how daily deal sites affecting the reputation of a business using evidences from Yelp reviews. It shows that (1) daily deal sites benefit from significant word-of-mouth effects during sales events, and (2) while the number of reviews increases significantly due to daily-deal promotions, average rating scores from reviewers who mention daily deals are 10\% lower than scores of their peers on average. Further, \cite{byers2012daily} investigates hypotheses such as whether group-buying subscribers are more critical than their peers, whether group-buying users are experimenting with services and merchants outside their
usual sphere, or whether some fraction of group-buying merchants provide significantly worse service to customers using group coupons.

 \cite{dholakia2011makes} tries to answer the the question  whether group-buying deals would be profitable for businesses. It shows that because of their many alluring features and the large volume of subscribers (site visitors), group-buying promotions offer a potentially compelling business model, and suggests that for
longer-term success and sustainability, this industry will likely have to design promotions that
better align the deals offered to end consumers with the benefits accruing to the merchants. \cite{edelman2011groupon} finds that offering vouchers is more profitable for merchants which are patient or relatively unknown, and for merchants with low marginal costs.

 There are several papers studying consumer purchase/repurchase behaviors towards group-buying deals. Using the dataset of Groupon.com and Yelp.com, \cite{zhao2012consumer} finds
that, although price promotions offer opportunities for consumers to try new products with a
relatively low cost, online word-of-mouth (WOM) still has a significant impact on product
sales. \cite{lvovskaya2012online} demonstrates  that
consumers consider price as the main contributing factor for a
repurchase at a  local business, after having redeemed a
discount coupon. \cite{song2012daily} finds evidence that social shopping
features deter inexperienced shoppers from buying deals early on.

Another weakly related work is \cite{grabchak2011adaptive}.  Although titled as ``Groupon style'', it actually studies the allocation problem for display advertising but not group-buying services. It formulates the allocation for display advertising as the multi-armed bandit problem, and proposes several greedy policies which can achieve 1/3- or 1/4-approximation bounds.

It is easy to see that the focuses of existing works are very different from ours. We stand on the position of a group-buying website and focus on the problem of revenue maximization by means of designing smart allocation algorithms.

\subsection{Relation to Scheduling and Knapsack Problems}

\GAP\ bears some similarity with the classic
parallel machine scheduling problems \cite{pinedo2008scheduling}.
The $K$ slots can be viewed as $K$ parallel machines with different speeds
(commonly termed as the {\em uniformly related machines}, see, e.g., \cite{ebenlendr2004optimal, horvath1977level}).
The $M$ deals can be viewed as $M$ jobs. Since a deal can be shown at different slots for different visitors, this
means the scheduling can be {\em preemptive} using the language of scheduling \cite{horvath1977level, lawler1978preemptive,gonzalez1978preemptive}.

One major difference between \GAP\ and the scheduling problems lies in the objective functions.
Most scheduling problems target to minimize some functions related to time given the constraint of finishing all the jobs,
such as makespan minimization \cite{ghirardi2005makespan}, total completion time minimization \cite{kuo2006minimizing},
total weighted completion time minimization \cite{schulz1996scheduling}, and total weighted flow time minimization \cite{azizoglu1999minimization,bansal2003algorithms}.
In contrast, our objective is to maximize the revenue generated from the finished jobs (deals in our problem)
given the constraint of limited time. This is similar to the classic knapsack problem in which we want to maximize the total
profit of the items that can be packed in a knapsack with a known capacity.
In fact, our FPTAS in Section~\ref{sec:fptas} borrows the technique from~\cite{ibarra1975fast} for the knapsack problem.
Our work is also related to the {\em interval scheduling} problem \cite{lipton1994online, epstein2012online}
in which the goal is to schedule a subset of interval (preemptively or non-preemptively)
such that the total profit is maximized. \GAP\ differs from this problem in that the intervals (we can think each deal as an interval)
may have variable sizes.


\section{Warmup: A Special Case}
\label{sec:simplecase}
In this section, we investigate a special case of \GAP, in which $l_i=u_i$ for every class.
In other words, we either select a fixed number ($\alloc_i=l_i$) of items from class $\deal_i$, or nothing from the class.
We present an algorithm that can find the optimal allocation in $\poly(\dealnum, \slotnum, \Nsum)$ time
based on dynamic programming.

For simplicity, we assume that the $M$ classes are indexed by the descending order of $l_i$ in this section. That is, we have
$
l_1\geq l_2\geq l_3\geq\cdots \geq l_M.
$

Let $G(i,j,k)$ denote the maximal profit by selecting at most $i$ items from extactly $k$  of the first $j$ classes, which can be expressed by the following integer optimization problem.
\begin{align}
G(i,j,k) \,\,=\,\,    \max_\alloc  &\sum_{t=1}^j \profit_t \alloc_t \notag\\
\text{subject to}\quad& \alloc_t= l_t \quad \text{or} \quad \alloc_t=0, \quad \text{for }1\leq t\leq j   \label{eq:sim1}\\
         &\sum_{t=1}^{r} \alloc_{[t]}\leq \sum_{t=1}^{r} N_t, \quad\text{for } r=1,2,...,\min{\{j,K\}} \label{eq:sim2}\\
         &\sum_{t=1}^{j} \alloc_{[t]} \leq i \label{eq:sim3}\\
	&\alloc_{[k]}>0, \quad
	\alloc_{[k+1]}=0 \label{eq:sim5}
\end{align}
In the above formulation, $\alloc=\{\alloc_1, \alloc_2, ..., \alloc_j\}$ is a $j$ dimensional allocation vector.
$\alloc_{[t]}$ is the $t$-th largest element of vector $\alloc$.
Eqn. (\ref{eq:sim2}) restates the majorization constraints.
Eqn. (\ref{eq:sim3}) ensures that at most $i$ items are selected and
Eqn. (\ref{eq:sim5}) indicates that exactly $k$ classes of items are selected.
Further, we use $Z(i,j,k)$ to denote the optimal allocation vector of the above problem (a $j$-dimensional vector).

It is easy to see that the optimal profit of the special case is $\max_{1\leq k \leq K}G(\Nsum, M, k)$.
In the following, we present an algorithm to compute the values of $G(i,j,k)$ for all $i,j,k$.

\topic{The Dynamic Program} \label{se:simdp}
Initially, we have the base cases that $G(i,j,k)=0$ if $i,j,k$ all equal zero.
For each $1\leq i\leq \Nsum, 1\leq j \leq M, 1\leq k \leq j$,  the recursion of the dynamic program for $G(i,j,k)$ is as follows.
\begin{align}
\label{eq:simdp}
&G(i,j,k) = \notag\\
& \max \left \{
            \begin{array}{lll}
             G(i,j-1,k), & \text{if }j>0 & \text{ (A)} \\
G(i-1,j,k), & \text{if } i>0  &\text{ (B)}\\
            G(i-l_j,j-1,k-1)+l_j p_j, & \text{if }Z(i-l_j,j-1,k-1)\cup l_j \text{ is feasible} & \text{ (C)}\\
            \end{array}\right.
\end{align}
Note that for the case (C) of the above recursion,
we need to check whether adding the $j$-th class in the optimal allocation vector $Z(i-l_j,j-1,k-1)$ is feasible,
i.e., satisfying the majorization constraints in Eqn. (\ref{eq:sim2}).
The allocation vector $Z(i,j,k)$ can be easily determined from the recursion as follows.
\begin{itemize}
\item[$\bullet$] If the maximum is achieved at case (A), we have $Z(i,j,k)_t=Z(i,j-1,k)_t, \forall 1\leq t\leq j-1,$ and $Z(i,j,k)_j=0.$
\item[$\bullet$] If the maximum is achieved at case (B), we have $Z(i,j,k)_t=Z(i-1,j,k)_t, \forall 1\leq t\leq j.$
\item[$\bullet$] If the maximum is achieved at case (C), we have $Z(i,j,k)_t=Z(i-l_j,j-1,k-1)_t, \forall 1\leq t\leq j-1,$ and $Z(i,j,k)_j=l_j.$
\end{itemize}
According to Eqn. (\ref{eq:simdp}), all $G(i,j,k)$ (and thus $Z(i,j,k)$) can be computed in the time\footnote{One can further decrease the complexity of computing all the $G(i,j,k)$'s to $O(\dealnum\Nsum \min(\dealnum, \slotnum))$  by using another recursion equation. We use the recursion equation as shown in Eqn. (\ref{eq:simdp}) considering its simplicity for presentation and understanding.} of $O(\dealnum^2\Nsum)$.

At the end of this section, we remark that the correctness of the dynamic program crucially relies on the fact that
$u_i=l_i$ for all $\deal_i$ and we can process the classes in descending order of their $l_i$s.
However, in the general case where $u_i\ne l_i$, we do not have such a natural order to process the classes
and the current dynamic program does not work any more.



\newcommand{\opt}{\alloc^\star} 
\renewcommand{\OPT}{\mathcal{OPT}} 
\newcommand{\monoopt}{\vec{\alloc^\star}} 

\section{The Exact Algorithm for \GAP}
\label{sec:general}

In this section, we consider the general case of \GAP\ ($l_i\leq u_i$) and present an algorithm that can find
the optimal allocation in $\poly(\dealnum, \slotnum, \Nsum)$ time
based on dynamic programming.
Even though the recursion of our dynamic program appears to be fairly simple,
its correctness relies on several nontrivial structural properties of the optimal allocation
of \GAP. We first present these properties in Section~\ref{subsec:optimalstructure}. Then we show the dynamic program in Section~\ref{subsec:generalDP} and prove its correctness.

\subsection{The Structure of the Optimal Solution}
\label{subsec:optimalstructure}

Before describing the structure of the optimal solution,
we first define some notations.

For simplicity of description,
we assume all $p_i$s are distinct
\footnote{
This is without loss of generality.
If $p_i=p_j$ for some $i\ne j$, we can break tie by adding an infinitesimal value to $p_i$,
which would not affect the optimality of our algorithm in any way.
}
and the $\dealnum$ classes are indexed in the descending order of $p_i$. That is, we have that
$
p_1>p_2>\cdots>p_\dealnum.
$
Note that the order of classes in this section is different from that in Section \ref{sec:simplecase}.

For any allocation vector $\alloc$, $\alloc_i$ indicates the number of items selected from class $i$, and $\alloc_{[i]}$ indicates the $i$-th largest element in vector $\alloc$. For ease of notions, when we say ``class $\alloc_{[i]}$", we actually
refer to the class corresponding to $\alloc_{[i]}$.
In a similar spirit, we slightly abuse the notation
$\profit_{[i]}$ to denote the per-item profit of the class $\alloc_{[i]}$.
For example, $\profit_{[1]}$ is the per-item profit of the class for which we allocate
the most number of items in $\alloc$ (rather than the largest profit). Note that if $\alloc_{[i]}=\alloc_{[i+1]}$, then we put the class with the larger per-item profit before the one with the smaller per-item profit. In other words, if $\alloc_{[i]}=\alloc_{[i+1]}$, then we have $\profit_{[i]}> \profit_{[i+1]}$.


In an allocation $\alloc$, we call class $\deal_i$ (or $\alloc_i$) {\em addable} (w.r.t. $\alloc$) if $\alloc_i<u_i$.
Similarly, class $\deal_i$ (or $\alloc_i$) is {\em deductible} (w.r.t. $\alloc$) if $\alloc_i>l_i$.
A class $\deal_i$ is {\em fractional} if it is both addable and deductible (i.e., $l_i<\alloc_i<u_i$).

Let $\opt$ be the optimal allocation vector.
We start with a simple yet very useful lemma.
\begin{lemma}
\label{lm:prop1}
If a deductible class $\deal_i$ and an addable class $\deal_j$ satisfy $\opt_i>\opt_j$ in the optimal solution $\opt$,
we must have $p_i>p_j$ (otherwise, we can get a better solution by setting $\opt_i=\opt_i-1$ and $\opt_j=\opt_j+1$).
\end{lemma}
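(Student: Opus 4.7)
The plan is the exchange argument hinted at parenthetically in the statement. Proceed by contradiction and suppose $p_i\leq p_j$; the tie-breaking assumption that all $p_k$ are distinct upgrades this to $p_i<p_j$. Define $\alloc'$ to agree with $\opt$ everywhere except that $\alloc'_i=\opt_i-1$ and $\alloc'_j=\opt_j+1$. It suffices to show that $\alloc'$ is feasible and strictly improves the objective, contradicting the optimality of $\opt$.

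Feasibility of the per-class bound constraints is immediate: deductibility of $\deal_i$ gives $\opt_i-1\geq l_i$, and addability of $\deal_j$ (under the convention that $\opt_j\geq l_j$, so the class is already chosen) together with $\opt_j<u_j$ gives $l_j\leq \alloc'_j\leq u_j$. The substantive point is the majorization constraint. Rather than compare $\alloc'$ directly to $\N$, I would prove the stronger claim $\alloc'\preceq \opt$; combined with $\opt\preceq \N$ and the transitivity of (sub-)majorization this yields $\alloc'\preceq \N$. The step from $\opt$ to $\alloc'$ is a unit ``Robin-Hood'' transfer from a larger coordinate $\opt_i$ to a strictly smaller coordinate $\opt_j$, with $\opt_i-\opt_j\geq 1$ by integrality. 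Such a transfer is an instance of a $T$-transform (or, in the boundary case $\opt_i-\opt_j=1$, simply swaps the two coordinates), and it is classical that $T$-transforms produce majorized vectors. To keep the proof self-contained I would verify this directly by fixing an arbitrary prefix length $r$ and splitting into cases according to whether the top-$r$ prefix of $\alloc'$ contains neither, only one, or both of the altered entries; using $\opt_i-1\geq\opt_j$ and $\opt_j+1\leq \opt_i$, each case yields $\sum_{k\leq r}\alloc'_{[k]}\leq \sum_{k\leq r}\opt_{[k]}$.

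The profit change is
\[
\sum_k p_k\alloc'_k \;-\; \sum_k p_k\opt_k \;=\; p_j-p_i \;>\; 0,
\]
so $\alloc'$ is a strictly better feasible allocation than $\opt$, contradicting optimality. Hence $p_i>p_j$.

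The step I expect to be the main obstacle is the bookkeeping behind the majorization claim, since the sorted order of $\alloc'$ may genuinely differ from that of $\opt$: the decreased entry can slide past coordinates equal to $\opt_i-1$, and the increased entry can rise past coordinates equal to $\opt_j+1$. Handling this cleanly, either by a careful case split on the positions of the two altered entries relative to the prefix or by invoking the $T$-transform fact as a black box, is where the care is needed; everything else is routine.
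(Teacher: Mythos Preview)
Your proposal is correct and is exactly the exchange argument the paper has in mind; indeed the paper's own ``proof'' is the single sentence ``The proof of lemma is quite straightforward,'' relying on the parenthetical hint already in the statement. Your explicit verification of the majorization constraint via a Robin-Hood/$T$-transform argument, and your caveat that addability here tacitly assumes $\opt_j\geq l_j$ (which holds in every application of the lemma, since classes under discussion lie in tight segments and hence are chosen), are both more careful than anything the paper spells out.
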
 The proof of lemma is quite straightforward.

The following definition plays an essential role in this section.

\begin{definition} (Breaking Points and Tight Segments)
Let the set of {\em breaking points} for the optimal allocation $\opt$
be
$$
P=\{t \mid \sum_{i=1}^{t} \opt_{[i]} =\sum_{i=1}^{t} N_i \}=\{t_1< t_2< \ldots< t_{|P|}\}.
$$
To simplify the notations for the boundary cases, we let $t_0=0$ and $t_{|P|+1}=\slotnum$.
We can partition $\opt$ into $|P|+1$ {\em tight segments}, $S_1,\ldots, S_{|P|+1}$,
where $S_i=\{\opt_{[t_{i-1}+1]}, \opt_{[t_{i-1}+2]},\ldots, \opt_{[t_i]}\}$. We call $S_{|P|+1}$ the {\em tail  segment}, and $S_1,\ldots,S_{|P|}$ {\em non-tail tight segments}.\qed
\end{definition}

We have the following useful property about the number of items for each class in a non-tail tight segment.
\begin{lemma}
\label{lm:range}
Given a non-tail tight segment $S_k=\{\opt_{[t_{k-1}+1]}, \opt_{[t_{k-1}+2]},\ldots, \opt_{[t_k]}\}$ which spans $N_{t_{k-1}+1}, \ldots, N_{t_{k}}$.  For each class $\deal_i$ that appears in $S_k$ we must have $N_{t_{k-1}+1} \geq \opt_{i}\geq N_{t_k}$.
\end{lemma}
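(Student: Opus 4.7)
The plan is to prove the two inequalities separately, each by combining the breaking-point equalities with one additional majorization constraint applied at a neighboring index.

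The key observation is that, by definition of a breaking point, the majorization inequality $\sum_{i=1}^{t} \opt_{[i]} \leq \sum_{i=1}^{t} N_i$ holds with \emph{equality} at $t=t_{k-1}$ and $t=t_k$ (with the convention $t_0=0$, which makes the equality trivial). Thus the two endpoints of the segment pin down the cumulative sum exactly, and only a single additional majorization step at each end is needed to localize the bound on the boundary entry of $S_k$.

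For the upper bound $\opt_i \leq N_{t_{k-1}+1}$: apply the majorization constraint at index $t_{k-1}+1$, i.e., $\sum_{i=1}^{t_{k-1}+1}\opt_{[i]} \leq \sum_{i=1}^{t_{k-1}+1} N_i$, and subtract the equality at $t_{k-1}$. This yields $\opt_{[t_{k-1}+1]} \leq N_{t_{k-1}+1}$. Since every class $\deal_i$ appearing in $S_k$ has rank at least $t_{k-1}+1$ in the sorted order, we have $\opt_i \leq \opt_{[t_{k-1}+1]} \leq N_{t_{k-1}+1}$. For the lower bound $\opt_i \geq N_{t_k}$: subtract the majorization inequality at $t_k-1$ from the breaking-point equality at $t_k$, giving $\opt_{[t_k]} \geq N_{t_k}$. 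Since every class in $S_k$ has rank at most $t_k$, its allocation is at least $\opt_{[t_k]} \geq N_{t_k}$.

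There is no serious obstacle here; the argument is essentially a one-line algebraic manipulation in each direction, using nothing beyond the definition of breaking points and the sorted ordering of $\opt_{[\cdot]}$. The only minor points to handle are the boundary case $k=1$ (absorbed by the convention $t_0=0$) and singleton segments where $t_k=t_{k-1}+1$, in which case the upper and lower bounds coincide with $\opt_{[t_k]}=N_{t_k}$ and the lemma still holds.
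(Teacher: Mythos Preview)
Your proposal is correct and follows essentially the same argument as the paper: subtract the breaking-point equality at one endpoint from the majorization inequality at the adjacent index to bound the extreme entry of $S_k$, then use the sorted ordering to extend the bound to all entries in the segment. Your treatment is in fact slightly more careful than the paper's, since you explicitly address the boundary case $k=1$ via the convention $t_0=0$ and note the singleton-segment case.
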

\begin{proof}
From the definition $\sum_{i=1}^{t_k} \opt_{[i]}=\sum_{i=1}^{t_k} N_i$ and majorization constraint $\sum_{i=1}^{t_k-1} \opt_{[i]} \leq \sum_{i=1}^{t_k-1} N_i$ we know that $\opt_{[t_k]}\geq N_{t_k}$. As $\opt_{[t_k]}$ is the smallest in $S_k$, we proved $\opt_{i} \geq N_{t_k}$. Similarly form $\sum_{i=1}^{t_{k-1}} \opt_{[i]}=\sum_{i=1}^{t_{k-1}} N_i$ and $\sum_{i=1}^{t_{k-1}+1} \opt_{[i]} \leq \sum_{i=1}^{t_{k-1}+1} N_i$ we know that $N_{t_{k-1}+1} \geq \opt_{[t_{k-1}+1]}$. As $\opt_{[t_{k-1}+1]}$ is the biggest in $S_k$, we proved $N_{t_{k-1}+1}\geq \opt_{i} $.
\end{proof}
Note that as we manually set $t_{|B|+1}=\slotnum$, the tail segment actually may not be tight. But we still have $N_{t_{k-1}+1} \geq \opt_{i}$.

Let us observe some simple facts about a tight segment $S_k$.
First, there is at most one fractional class.
Otherwise, we can get a better allocation by selecting one more item from the most profitable fractional class and
removing one item from the least profitable fractional class.
Second, in segment $S_k$, if $\deal_i$ is deductible and $\deal_j$ is addable,
we must have $p_i>p_j$ (or equivalently $i<j$) .
Suppose $\deal_{\alpha(S_k)}$ is the per-item least profitable deductible class in $S_k$
and $\deal_{\beta(S_k)}$ is the per-item most profitable addable class in $S_k$.
From the above discussion, we know $\alpha(S_k)\leq \beta(S_k)$.
If $\alpha(S_k) = \beta(S_k)$, then $\alpha(S_k)$ is the only fractional class in $S_k$.
If there is no deductible class in $S_k$, we let $\alpha(S_k)=1$.
Similarly, if there is no addable class in $S_k$, we let $\beta(S_k)=M$.
Let us summarize the properties of tight segments in the lemma below.

\begin{lemma}
\label{lm:segment}
Consider a particular tight segment $S_k$ of the optimal allocation $\opt$. The following properties hold.
\begin{enumerate}
\item There is at most one fractional class.
\item For each class $\deal_i$ that appears in $S_k$ with $i<\beta(S_k)$,
we must have $\opt_i=u_i$.
\item For each class $\deal_i$ that appears in $S_k$ with $i>\alpha(S_k)$,
we must have $\opt_i=l_i$.
\end{enumerate}
\end{lemma}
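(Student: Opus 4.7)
The plan is to handle parts (2) and (3) directly from the definitions of $\alpha(S_k)$ and $\beta(S_k)$, leaving only part (1) as the substantive work. For part (2), I would argue by contradiction: if some class $\deal_i$ in $S_k$ satisfies $i < \beta(S_k)$ but $\opt_i < u_i$, then $\deal_i$ is addable in $S_k$ and has $p_i > p_{\beta(S_k)}$ by the descending-profit indexing, contradicting the choice of $\deal_{\beta(S_k)}$ as the per-item most profitable addable class. Part (3) is entirely symmetric, with ``addable/$\beta$/$u$'' replaced throughout by ``deductible/$\alpha$/$l$''.

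For part (1), suppose for contradiction that two distinct classes $\deal_a, \deal_b$ in $S_k$ are both fractional, WLOG $p_a > p_b$. I would form the perturbed vector $\bfy$ with $\bfy_a = \opt_a + 1$, $\bfy_b = \opt_b - 1$, and $\bfy_i = \opt_i$ otherwise. The box constraints are preserved because both classes were strictly between their $l$ and $u$, and the objective strictly increases by $p_a - p_b > 0$; to contradict optimality it remains only to verify $\bfy \preceq \N$.

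To check majorization I would compare $\sum_{j=1}^t \bfy_{[j]}$ against $\sum_{j=1}^t N_j$ for every $t$. For an intermediate index $t \in (t_{k-1}, t_k)$, the pointwise bound $\sum_{j \in T} \bfy_j \leq \sum_{j \in T} \opt_j + 1$ over any $t$-subset $T$ shows that the top-$t$ sum of $\bfy$ grows by at most $1$; since $t$ lies strictly between two consecutive breaking points, integrality gives $\sum_{j=1}^t \opt_{[j]} \leq \sum_{j=1}^t N_j - 1$, so the perturbed sum is still at most $\sum_{j=1}^t N_j$. For an ``outside'' index ($t \leq t_{k-1}$ or $t \geq t_k$), I would show that the top-$t$ positions continue to respect the tight-segment boundaries. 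Combining Lemma~\ref{lm:range} with the strict integer decrease $N_{t_{k-1}} \geq N_{t_{k-1}+1} + 1$ gives $\min S_{k-1} \geq N_{t_{k-1}+1} + 1 \geq \max S_k + 1$, so even after bumping one element of $S_k$ up by $1$ no element of $S_k$ overtakes an element of $S_{k-1}$; a symmetric gap argument between $S_k$ and $S_{k+1}$ handles the right boundary. Hence the top-$t_{k-1}$ sum is unchanged, the top-$t_k$ sum still equals $\sum_{j=1}^{t_k} N_j$ (the $\pm 1$ swap preserves the total inside $S_k$), and sums for $t > t_k$ reduce to unchanged segments outside $S_k$.

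The main obstacle is the boundary bookkeeping just described: ensuring the $\pm 1$ perturbation cannot push a class out of its tight segment in the sorted order. The two ingredients I would lean on are Lemma~\ref{lm:range}, which pins each class of $S_k$ inside $[N_{t_k}, N_{t_{k-1}+1}]$, and the strict integer decrease of the $N_i$'s, which produces a unit gap between values allowable in adjacent segments and so absorbs the swap. The edge cases $k=1$ and $k=|P|+1$ (the tail) are easier: the former has no $S_{k-1}$, and for the tail the right boundary need not be tight, so the strict-inequality argument applies there as well while the swap still preserves the segment total.
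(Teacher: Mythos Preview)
Your proposal is correct and follows the same approach as the paper. The paper's own justification is in fact the brief paragraph immediately preceding the lemma: it asserts the swap (add one item to the more profitable fractional class, remove one from the less profitable one) yields a better allocation, and derives (2) and (3) straight from the definitions of $\alpha(S_k)$ and $\beta(S_k)$, exactly as you do. The only difference is that you carry out the feasibility check for the swap in detail---using the strict-inequality slack at non-breaking indices together with Lemma~\ref{lm:range} and the integer gap $N_{t_{k-1}} \geq N_{t_{k-1}+1}+1$ to keep the perturbed entries inside their segment---whereas the paper leaves this verification implicit.
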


Now, we perform the following greedy procedure to produce a coarser partition of $\opt$ into disjoint {\em blocks},
$B_1, B_2,\ldots, B_h$, where each block is the union of several consecutive tight segments.
The purpose of this procedure here is to endow one more nice property to the blocks.
We overload the definition of $\alpha(B_i)$ ($\beta(B_i)$ resp.) to denote the index of
the per-item least (most resp.) profitable deductible (addable resp.) class in $B_i$.
We start with $B_1=\{S_1\}$. So, $\alpha(B_1)=\alpha(S_1)$ and $\beta(B_1)=\beta(S_1)$.
Next we consider $S_2$.
If $[\alpha(B_1), \beta(B_1)]$ intersects with $[\alpha(S_2), \beta(S_2)]$,
we let $B_1\leftarrow B_1\cup S_2$.
Otherwise, we are done with $B_1$ and start to create $B_2$ by letting $B_2=S_2$.
Generally, in the $i$-th step, suppose we are in the process of creating block $B_j$ and proceed to $S_i$.
If $[\alpha(B_j), \beta(B_j)]$ intersects with $[\alpha(S_i), \beta(S_i)]$,
we let $B_j\leftarrow B_j\cup S_i$.
Note that the new $[\alpha(B_j), \beta(B_j)]$ is the intersection of old $[\alpha(B_j), \beta(B_j)]$ and $[\alpha(S_i), \beta(S_i)]$.
Otherwise, we finish creating $B_j$ and let the initial value of $B_{j+1}$ be $S_i$.

We list the useful properties in the following critical lemma.
We can see that Property (2) is new (compared with Lemma~\ref{lm:segment}).
\begin{lemma}
\label{lm:block}
Suppose $B_1,\ldots, B_h$ are the blocks created according to the above procedure from the optimal allocation $\opt$, and
$\alpha(B_i)$ and $\beta(B_i)$ are defined as above.
The following properties hold.
\begin{enumerate}
\item Each block has at most one fractional class.
\item $
\alpha(B_1)\leq \beta(B_1) < \alpha(B_2)\leq \beta(B_2) <\ldots< \alpha(B_h)\leq \beta(B_h).
$
\item For each class $\deal_i$ that appears in any block $B_k$ with $i<\beta(B_k)$,
we must have $\opt_i=u_i$.
\item For each class $\deal_i$ that appears in any block $B_k$ with $i>\alpha(B_k)$,
we must have $\opt_i=l_i$.
\end{enumerate}
\end{lemma}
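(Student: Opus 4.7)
My plan is to prove the four properties in the order (2), (1), (3), (4). The monotone-interval structure in (2) is the substantive new content, and once it is in hand the remaining three properties follow mechanically from the definitions of $\alpha(B_k)$ and $\beta(B_k)$ together with Lemma~\ref{lm:segment}.

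The first step is an intrinsic description of $\alpha(B_k)$ and $\beta(B_k)$. Since a greedy merge replaces the running interval with its intersection with $[\alpha(S_i),\beta(S_i)]$, a simple induction on the number of merges gives $\alpha(B_k)=\max_{S\subseteq B_k}\alpha(S)$ and $\beta(B_k)=\min_{S\subseteq B_k}\beta(S)$. Hence $\alpha(B_k)$ is the index of the per-item least profitable deductible class across all of $B_k$, and $\beta(B_k)$ is the index of the per-item most profitable addable class across all of $B_k$. Because a merge happens only when the two intervals intersect, the new interval is nonempty, so $\alpha(B_k)\leq \beta(B_k)$ is maintained throughout the procedure.

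The core of Property (2) is the strict inequality $\beta(B_j)<\alpha(B_{j+1})$ for $j<h$. When $B_{j+1}$ is first opened with a tight segment $S_i=S_{a+1}$, the greedy rule says $[\alpha(B_j),\beta(B_j)]$ and $[\alpha(S_i),\beta(S_i)]$ are disjoint, i.e., either (a) $\beta(B_j)<\alpha(S_i)$ or (b) $\beta(S_i)<\alpha(B_j)$. Case (a) is what we want, and I would rule out (b) as follows. Consider $\deal_{\alpha(B_j)}$, which is deductible and lies in some (necessarily non-tail) tight segment $S_b\subseteq B_j$ with $b\leq a$, and $\deal_{\beta(S_i)}$, which is addable in $S_i$. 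By Lemma~\ref{lm:range}, $\opt_{\alpha(B_j)}\geq N_{t_b}\geq N_{t_a}$, while $\opt_{\beta(S_i)}\leq N_{t_a+1}$, so $N_{t_a}>N_{t_a+1}$ yields the strict inequality $\opt_{\alpha(B_j)}>\opt_{\beta(S_i)}$. Lemma~\ref{lm:prop1} then converts this into $p_{\alpha(B_j)}>p_{\beta(S_i)}$, i.e., $\alpha(B_j)<\beta(S_i)$, contradicting (b). Any subsequent merge into $B_{j+1}$ can only increase $\alpha(B_{j+1})$ (it is a max), so the inequality $\beta(B_j)<\alpha(B_{j+1})$ persists through the end of the construction.

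Given Property (2), Property (1) follows in a line: if $\deal_i$ is fractional in $B_k$, then as a deductible $i\leq \alpha(B_k)$ and as an addable $i\geq \beta(B_k)$, and $\alpha(B_k)\leq \beta(B_k)$ forces $i=\alpha(B_k)=\beta(B_k)$, determining $i$ uniquely. Properties (3) and (4) are immediate from the intrinsic descriptions: a class $\deal_i$ appearing in $B_k$ lies in some constituent $S\subseteq B_k$; if $i<\beta(B_k)\leq \beta(S)$, then $\deal_i$ is not addable in $S$, so $\opt_i=u_i$, and symmetrically $i>\alpha(B_k)$ forces $\opt_i=l_i$. I expect the main obstacle to be ruling out case (b) in Property (2), where one must combine the value separation across a breaking point (Lemma~\ref{lm:range}) with the exchange-argument content of Lemma~\ref{lm:prop1}; everything else is essentially bookkeeping from the definitions.
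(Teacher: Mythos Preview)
Your proposal is correct and follows essentially the same approach as the paper: both derive the intersection description $[\alpha(B_k),\beta(B_k)]=\bigcap_{S\subseteq B_k}[\alpha(S),\beta(S)]$, both use Lemma~\ref{lm:prop1} (together with the value separation across a breaking point, which you make explicit via Lemma~\ref{lm:range}) to force the disjoint intervals at a block boundary to fall in the order $\beta(B_j)<\alpha(B_{j+1})$, and both inherit (3) and (4) from Lemma~\ref{lm:segment}. The only real difference is organizational---you prove (2) first and then read off (1), whereas the paper proves (1) first---and your treatment of the strict inequality $\opt_{\alpha(B_j)}>\opt_{\beta(S_i)}$ is slightly more careful than the paper's.
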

\begin{proof}
Consider block $B_k=S_i\cup S_{i+1}\cup \ldots\cup S_j$.
It is easy to see from the above procedure that
$[\alpha(B_k), \beta(B_k)]=\bigcap_{t=i}^j [\alpha(S_t), \beta(S_t)]$.
If there are two different fractional class (they must from different $S_t$s),
we have $[\alpha(B_k), \beta(B_k)]=\emptyset$, contradicting the procedure for creating $B_k$.
This proves (1).

Now, we prove (2).
Let us first consider two adjacent tight segments $S_{i-1}$ and $S_{i}$.
By Lemma~\ref{lm:prop1}, we have $p_{\alpha(S_{i-1})}>p_{\beta(S_{i})}$ (or equivalently, $\alpha(S_{i-1})<\beta(S_{i})$).
Suppose we are in the $i$th step when we are creating block $B_j$.
We can see that $\alpha(B_j)> \beta(S_k)$ for all $k\geq i$.
This is because $\alpha(B_j)$ is equal to $\alpha(S_{k'})$ for some $k'<i$.
Consider block $B_{j+1}$. We know it consists of several tight segments $S_k$ with $k\geq i$.
So, $\beta(B_{j+1})$ is equal to $\beta(S_{k})$ for some $k\geq i$, which is less than $\alpha(B_j)$.
Moreover, since intervals $[\alpha(B_j),\beta(B_j)]$ are disjoint, we complete the proof of (2).

Properties (3) and (4) are essentially inherited from Lemma~\ref{lm:segment}. \qed
\end{proof}

\subsection{The Dynamic Program}
\label{subsec:generalDP}

Our algorithm for \GAP\ has two levels, both based on dynamic programming.
In the lower level, we attempt to find the optimal allocation for each block.
Then in the higher level, we assemble multiple blocks together to form a global optimal solution.
Lastly, we prove the optimal allocations for these individual blocks do not use one class of items multiple times, thus
can be assembled together.

 \topic{The Lower Level Dynamic Program}
Let us first describe the lower level dynamic program.
Denote $F(i,j,k), \forall 1\leq i \leq j\leq \slotnum, 1\leq k\leq \dealnum$ as the maximal profit generating from the block $B$ which spans $N_i, N_{i+1},\ldots, N_j$ and $\alpha(B)\leq k \leq \beta(B)$. Note here the block $B$ is not one of the blocks created from the optimal allocation $\opt$, but we still require that it satisfies the properties described in Lemma~\ref{lm:block}.
More formally, $F(i,j,k)$ can be written as an integer program in the following form:
\begin{align}
F(i,j,k) \,\,=\,\,    \max  &\sum_{t=1}^M \profit_t \alloc_t \notag\\
\text{subject to}\quad& \alloc_t= u_t \quad \text{or} \quad \alloc_t=0, \quad \text{for }t<k   \label{eq:1}\\
    \quad\quad& \alloc_t= l_t \quad \text{or} \quad \alloc_t=0, \quad\text{for } t>k  \label{eq:2}\\
    \quad\quad& l_t\leq \alloc_t\leq u_t \quad \text{or} \quad \alloc_t=0, \quad\text{for } t=k \label{eq:3}\\
         &\sum_{t=1}^{r} \alloc_{[t]}\leq \sum_{t=i}^{i+r-1} N_t, \quad\text{for } r=1,2,...j-i \label{eq:4}\\
         &\sum_{t=1}^{j-i+1} \alloc_{[t]} = \sum_{t=i}^{j} N_t \label{eq:5}\\
         &\alloc_{[j-i+2]}=0. \label{eq:6}
\end{align}
Constraints \eqref{eq:1} and \eqref{eq:2} correspond to Properties (3) and (4) in Lemma~\ref{lm:block}. The constraint \eqref{eq:3} says $\deal_k$ may be the only fractional constraint.
The constraints \eqref{eq:4} are the majorization constraints. Constraints \eqref{eq:5} and \eqref{eq:6} say $B$ spans $N_i, \ldots, N_j$ with exactly $j-i+1$ class of items.
If $j=\slotnum$ (i.e., it is the last block), we do not have the last two constraints since we may not have to fill all slots, or with fixed number of classes.

To compute the value of $F(i,j,k)$, we can leverage the dynamic program developed in Section~\ref{se:simdp}. The catch is that for any $x_k\in [l_k, u_k]$, according to Eqn.~ \eqref{eq:1} and \eqref{eq:2},
$\alloc_i$ can only take $0$ or a non-zero value (either $u_i$ or $l_i$).
This is the same as making $u_i=l_i$. Therefore, for a given $x_k\in [l_k, u_k]$, the optimal profit $F(i,j,k)$, denoted as $F_{x_k}(i,j,k)$, can be solved by the dynamic program in Section~\ref{se:simdp}.\footnote{
The only extra constraint is \eqref{eq:5}, which is not hard to ensure at all since the dynamic program in Section~\ref{se:simdp} also keeps track of
the number of slots used so far.} Finally, we have
$$ F (i,j,k)=\max_{x_k=0, l_k, l_k+1, l_k+2, \cdots, u_k} F_{x_k}(i,j,k).$$

 \topic{The Higher Level Dynamic Program}
We use $D(j,k)$ to denote the optimal allocation of the following subproblem:
if $j<\slotnum$, we have to fill up exactly $N_1,N_2,\ldots, N_j$ (i.e., $\sum_{i}\alloc_i =\sum_{i=1}^j N_j$)
and $\alpha(B)\leq k$ where $B$ is the last block of the allocation;
if $j=\slotnum$, we only require $\sum_{i}\alloc_i \leq \sum_{i=1}^j N_j$.
Note that we still have the majorization constraints and want to maximize the profit.
The recursion for computing $D(j,k)$ is as follows:
\begin{equation}
\label{eq:dp}
D(j,k) = \max \Bigl\{ \max_{i < j}\{D(i,k-1)+F(i+1, j, k)\}, D(j, k-1) \Bigr\}.
\end{equation}
We return $D(\slotnum, \dealnum)$ as the final optimal revenue of \GAP.

As we can see from the recursion \eqref{eq:dp},
the final value $D(\slotnum, \dealnum)$ is a sum of several $F$ values, say
$F(1, t_1, k_1), F(t_1+1, t_2, k_2), F(t_2+1, t_3, k_3),\ldots$, where $t_1<t_2<t_3<\ldots$ and $k_1<k_2<k_3<
\ldots$. Each such $F$ value corresponds to an optimal allocation of a block.
Now, we answer the most critical question concerning the correctness of the dynamic program:
whether the optimal allocations of the corresponding blocks together form a global feasible allocation?
More specifically, the question is whether one class can appear in two different blocks?
We answer this question negatively in the next lemma.

\begin{lemma}
\label{lm:nooverlap}
Consider the optimal allocations $\alloc^1$ and $\alloc^2$
corresponding to $F(i_1, j_1, k_1)$ and $F(i_2, j_2, k_2)$ respectively,
where $i_1\leq j_1<i_2\leq j_2$ and $k_1< k_2$.
For any class $\deal_i$, it is impossible that both $\alloc^1_i\ne 0$ and $\alloc^2_i\ne 0$ are true.
\end{lemma}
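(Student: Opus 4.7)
My plan is to derive the non-overlapping property from a single clean fact, namely that every positive allocation inside the first block strictly exceeds every positive allocation inside the second block, and then to exploit the asymmetric role of $k_1$ versus $k_2$ in constraints~\eqref{eq:1}--\eqref{eq:3} to get a contradiction whenever some class tries to be in both allocations.

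First I would prove the separation inequality. Let $\alloc^1$ correspond to $F(i_1,j_1,k_1)$ and $\alloc^2$ to $F(i_2,j_2,k_2)$. Since $j_1<i_2\leq\slotnum$, the block $B_1$ is non-tail, so constraints~\eqref{eq:4} with $r=j_1-i_1$ and \eqref{eq:5} together give $\alloc^1_{[j_1-i_1+1]}\geq N_{j_1}$; hence every nonzero $\alloc^1_i$ satisfies $\alloc^1_i\geq N_{j_1}$. On the other hand, constraint \eqref{eq:4} with $r=1$ applied to $F(i_2,j_2,k_2)$ gives $\alloc^2_{[1]}\leq N_{i_2}$, so every $\alloc^2_i$ satisfies $\alloc^2_i\leq N_{i_2}$. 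Because $\N$ is strictly decreasing and $j_1<i_2$, we have $N_{j_1}>N_{i_2}$, so
\begin{equation*}
\alloc^1_i\geq N_{j_1}>N_{i_2}\geq \alloc^2_i \qquad \text{whenever both are nonzero.}
\end{equation*}

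Next I would suppose, for contradiction, that some class $\deal_i$ has both $\alloc^1_i>0$ and $\alloc^2_i>0$, and split on how $i$ compares to $k_1<k_2$.

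\textbf{Case } $i<k_1<k_2$. Constraint \eqref{eq:1} applied to both blocks forces $\alloc^1_i=u_i=\alloc^2_i$, contradicting the strict separation.

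\textbf{Case } $k_1\leq i<k_2$. Constraint \eqref{eq:1} applied to the second block forces $\alloc^2_i=u_i$, while \eqref{eq:2}--\eqref{eq:3} applied to the first block give $\alloc^1_i\leq u_i$. Combined with the separation, $u_i\geq \alloc^1_i>\alloc^2_i=u_i$, a contradiction.

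\textbf{Case } $k_1<k_2\leq i$. Constraint \eqref{eq:2} forces $\alloc^1_i=l_i$, while \eqref{eq:3}--\eqref{eq:2} give $\alloc^2_i\geq l_i$, so $l_i=\alloc^1_i>\alloc^2_i\geq l_i$, again a contradiction.

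The only real obstacle is getting the separation inequality rigorously from the definition of $F$, since the block $B_2$ could be the tail (where \eqref{eq:5}--\eqref{eq:6} are relaxed); but the upper bound $\alloc^2_{[1]}\leq N_{i_2}$ comes purely from the always-present majorization constraint \eqref{eq:4}, so the tail case causes no trouble. Once the separation is in hand, the three-case analysis above is immediate from the structural constraints \eqref{eq:1}--\eqref{eq:3} and delivers the lemma.
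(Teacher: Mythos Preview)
Your proof is correct and follows essentially the same approach as the paper: establish the strict separation $\alloc^1_i>\alloc^2_i$ for nonzero entries (the paper invokes Lemma~\ref{lm:range}, you derive it directly from \eqref{eq:4}--\eqref{eq:6}), then run a three-way case split on the position of $i$ relative to $k_1,k_2$ and appeal to constraints \eqref{eq:1}--\eqref{eq:3} to force an inequality incompatible with the separation. Your case boundaries differ cosmetically ($i<k_1$, $k_1\le i<k_2$, $k_2\le i$ versus the paper's $i\le k_1$, $k_1<i<k_2$, $i\ge k_2$), and factoring out the separation inequality once up front is arguably cleaner than re-deriving it in each case, but the substance is identical.
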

\begin{proof}
We distinguish a few cases. We will use Lemma~\ref{lm:range} on blocks in the following proof.
\begin{enumerate}
\item $i\leq k_1$. Suppose by contradiction that $\alloc^1_i\ne 0$ and $\alloc^2_i\ne 0$.
We always have $\alloc^1_i \leq u_i$.
Since $i\leq k_1<k_2$, again by Lemma~\ref{lm:block}, we have also $\alloc^2_i=u_i$.
Moreover, from Lemma~\ref{lm:range} we know that $\alloc^1_i \geq N_{j_1}> N_{i_2}\geq \alloc^2_i$. This renders a contradiction.
\item $i\geq k_2$. Suppose by contradiction that $\alloc^1_i\ne 0$ and $\alloc^2_i\ne 0$.
By Lemma~\ref{lm:block}, we know $\alloc^1_i=l_i$ and $\alloc^2_i\geq l_i$.
We also have that $\alloc^1_i>N_{i_2} \geq \alloc^2_i$ due to Lemma~\ref{lm:range}, which gives
a contradiction again.
\item $k_1<i<k_2$. Suppose by contradiction that $\alloc^1_i\ne 0$ and $\alloc^2_i\ne 0$.
By Lemma~\ref{lm:block}, we know $\alloc^1_i=l_i$ and $\alloc^2_i=u_i$. We also have the contradiction by
$\alloc^1_i>\alloc^2_i$.
\end{enumerate}
We have exhausted all cases and hence the proof is complete.\qed
\end{proof}

\begin{theorem}
The dynamic program \eqref{eq:dp} computes the optimal revenue for \GAP\
in time $\poly(\dealnum, \slotnum, \Nsum)$.
\end{theorem}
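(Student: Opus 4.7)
The plan is to prove correctness and bound the running time separately, using the structural properties of Lemma~\ref{lm:block} and the non-overlap guarantee of Lemma~\ref{lm:nooverlap} already established.

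For correctness, I would argue in two directions. For the ``$D(\slotnum,\dealnum)$ is achievable'' direction, I would unroll the recursion~\eqref{eq:dp} and show that any sequence of $F$-values it sums, say $F(1,t_1,k_1),F(t_1+1,t_2,k_2),\ldots,F(t_{r-1}+1,\slotnum,k_r)$ with $k_1<k_2<\cdots<k_r$, yields a concatenation of block-allocations that constitute a globally feasible allocation. The per-block majorization constraints \eqref{eq:4}--\eqref{eq:5} guarantee the cumulative sums match $\sum_{t=1}^{j}N_t$ at each boundary $t_s$ and lie below it in between, which is exactly the global majorization property. Lemma~\ref{lm:nooverlap} ensures that no class is used by two different $F$-allocations, so concatenation does not over-allocate any class. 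For the ``$D(\slotnum,\dealnum)\geq\OPT$'' direction, I would take the optimal allocation $\opt$, partition it into blocks $B_1,\ldots,B_h$ using the procedure preceding Lemma~\ref{lm:block}, and let $t_s$ be the last slot covered by $B_s$ and $k_s\in[\alpha(B_s),\beta(B_s)]$ be the (unique candidate) index of the fractional class in $B_s$ (if none exists, any index in the interval works). Properties (3) and (4) of Lemma~\ref{lm:block} show that the restriction of $\opt$ to block $B_s$ is a feasible solution to the integer program defining $F(t_{s-1}+1,t_s,k_s)$, so its profit is at most that $F$-value. Property (2) of Lemma~\ref{lm:block} gives $k_1<k_2<\cdots<k_h$, matching the monotonicity implicit in \eqref{eq:dp}. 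Summing, the recursion produces a value at least as large as $\OPT$.

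The key substep is verifying the lower-level DP actually computes $F(i,j,k)$ correctly. For a fixed $x_k\in\{0,l_k,l_k+1,\ldots,u_k\}$, constraints \eqref{eq:1}--\eqref{eq:3} force every remaining class into an ``all-or-nothing'' allocation with a predetermined quantity ($u_t$ for $t<k$, $l_t$ for $t>k$, and $x_k$ for $t=k$). After renaming these quantities as $l_t$'s, this is literally the special case of Section~\ref{sec:simplecase} restricted to slots $N_i,\ldots,N_j$ with the extra requirement \eqref{eq:5} that all $j-i+1$ slots be tight and \eqref{eq:6} that exactly $j-i+1$ classes be selected; both extras are read off directly from the state $(i,j,k)$ of the DP in Section~\ref{se:simdp}, whose state already tracks the slot-count and class-count. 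Maximizing over $x_k$ gives $F(i,j,k)$.

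For the running time I would just count states and transitions. The table $F(i,j,k)$ has $O(\slotnum^2 \dealnum)$ entries; each entry requires at most $u_k-l_k+2=O(\Nsum)$ evaluations of $F_{x_k}$, and each such evaluation runs the $O(\dealnum^2\Nsum)$ dynamic program from Section~\ref{se:simdp}, for a total of $O(\slotnum^2 \dealnum^3 \Nsum^2)$. The higher-level table $D(j,k)$ has $O(\slotnum \dealnum)$ entries, each computed in $O(\slotnum)$ time from already-tabulated $F$-values. The grand total is $\poly(\dealnum,\slotnum,\Nsum)$ as claimed.

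The main obstacle I anticipate is the ``$\geq\OPT$'' direction of correctness: one must check not only that $\opt$ restricted to a block satisfies \eqref{eq:1}--\eqref{eq:6} for some $k$, but also that the particular intersection-of-intervals block procedure produces a well-defined $k_s\in[\alpha(B_s),\beta(B_s)]$ and that the strict inequalities $k_1<\cdots<k_h$ match the strict increase embedded in the recursion via $D(i,k-1)+F(i+1,j,k)$. Both follow from Property (2) of Lemma~\ref{lm:block}, but making the bookkeeping transparent (in particular handling the tail block $B_h$, which may not be tight and for which the last two constraints \eqref{eq:5}--\eqref{eq:6} are dropped) is the delicate piece.
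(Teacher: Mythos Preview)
Your proposal is correct and follows essentially the same approach as the paper: decompose $\opt$ into blocks via Lemma~\ref{lm:block} to show $D(\slotnum,\dealnum)\geq\OPT$, and invoke Lemma~\ref{lm:nooverlap} to certify feasibility of the concatenated allocation and hence $D(\slotnum,\dealnum)\leq\OPT$. You are more explicit than the paper about the running-time count, the lower-level DP reduction to Section~\ref{sec:simplecase}, and the tail-block bookkeeping, but the skeleton of the argument is identical.
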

\begin{proof}
By Lemma~\ref{lm:block}, the optimal allocation $\opt$ can be decomposed into several blocks
$B_1, B_2,\ldots, B_h$ for some $h$.
Suppose $B_k$ spans $N_{i_{k-1}+1}, \ldots, N_{i_{k}}$.
Since the dynamic program computes the optimal value,
we have $F(i_{k-1}+1, i_k, \alpha(B_k))\geq \sum_{i\in B_k} p_i \opt_i$.
Moreover, the higher level dynamic program guarantees that
$$
D(\slotnum,\dealnum)\geq
\sum_k F(i_{k-1}+1, i_k, \alpha(B_k))\geq \sum_k\sum_{i\in B_k} p_i \opt_i =\OPT.
$$
By Lemma~\ref{lm:nooverlap}, our dynamic program returns a feasible allocation.
So, it holds that $D(\slotnum,\dealnum)\leq \OPT$. Hence, we have shown that
$D(\slotnum,\dealnum)= \OPT$.
\qed
\end{proof}

\newcommand{\tF}{\tilde{F}}
\newcommand{\tprofit}{\tilde{Q}}

\section{A Full Polynomial Time Approximation Scheme}
\label{sec:fptas}

The optimal algorithm developed in Section~\ref{sec:general} runs in pseudo-polynomial time
(since it is a polynomial of $\Nsum$).
In this section, we present a full polynomial time approximation scheme (FPTAS) for \GAP.
Recall that we say there is an FPTAS for the problem, if for any fixed constant $\epsilon>0$,
we can find a solution with profit at least $(1-\epsilon)\OPT$ in time polynomial in the input size
(i.e., $O(\dealnum + \slotnum\times \log \Nsum )$)
and $1/\epsilon$.
Note that  assuming $\P\ne \NP$ this is the best possible approximation algorithm we can obtain  since
\GAP\ is $\NP$-hard (it is a significant generalization of the $\NP$-hard knapsack problem).

Our FPTAS is based on the 2-level dynamic programs we developed in Section~\ref{subsec:generalDP}.
We observe that it is only the lower level dynamic program that runs in pseudo-polynomial time.
So our major effort is to convert this dynamic program into an FPTAS.
As before, we still need to guarantee at the end that the allocations for these block can be concatenated together
to form a global feasible allocation.
From now on, we fix the small constant $\epsilon>0$.

\topic{FPTAS for the Lower Level Problem}
We would like to approximate the optimal allocation of a block (i.e., $F(i,j,k)$) in polynomial time.
Now, we fix $i,j$ and $k$.
Since we can only get an approximate solution, some properties are lost and we have to modify
the lower level subproblem (LLS) in the following way.
\begin{enumerate}
   \item[L1.] \eqref{eq:1} is strengthened to be $\alloc_t= \{0, u_t\}$,  for  $t<k$ and $N_i\geq u_t \geq N_j$;
   \item[L2.] \eqref{eq:2} is strengthened to be $\alloc_t= \{0, l_t\}$ for  $t>k$  and $N_i\geq l_t \geq N_j$;
   \item[L3.] \eqref{eq:3} is strengthened to be $\alloc_k=0$ or ($l_k\leq \alloc_k\leq u_k$ and $N_i\geq \alloc_k\geq N_j$);
   \item[L4.] \eqref{eq:4} remain the same and \eqref{eq:5} is relaxed to $\sum_{t=1}^{j-i+1} \alloc_{[i]} \leq \sum_{t=i}^{j} N_t$.
\end{enumerate}
Let $F(i,j,k)$ to be the optimal revenue allocation subject to the new set of constraints.
From the proof of Lemma~\ref{lm:nooverlap}, we can see the modifications of \eqref{eq:1} and \eqref{eq:2}
do not change the problem the optimal allocation also satisfies the new constraints.
\eqref{eq:5} is relaxed since we can not keep track of all possible total sizes in polynomial time.
The optimal solution of the modified problem is no less than $F(i,j,k)$.

We first assume we know the value of $\alloc_k$. We will get rid of this assumption later.
For ease of description, we need some notations.
Let $C$ be the set of classes that may participate in any allocation of LLS (those satisfy L1 and L2).
Let $s_t$ be the number of items used in $\deal_t$ if $\deal_t$ participates the allocation.
In other words, $s_t=u_t$ if $s_t$ satisfies L1,
$s_t=l_t$ if $s_t$ satisfies L2,
and
$s_k=\alloc_k$.

Now, we modify the profit of each class.
Let $\tF$ denote the maximal profit by simply taking one class that satisfies the constraints. It is easy to see that
$$\max_{t\in C}p_ts_t\leq\tF\leq F(i,j,k).$$
For any class $\deal_t$ with $t\ne k$, if $\deal_t$ participates the optimal solution of $F(i,j,k)$,
we know how many items are used (either $u_t$ or $l_t$).
So, we associate the entire class $\deal_t$ with a profit (called {\em modified  class profit})
$$
\tprofit_t =
   \lfloor \frac{2\dealnum \profit_t s_t}{\epsilon \tF} \rfloor \text{ for } t\in C
$$
The modified profit of $\deal_t$ can be seen as a scaled and discretized version of the actual profit of $\deal_t$.
\footnote{
It is critical for us to use the discretized profit as one dimension of the dynamic program
instead of the discretized size.
Otherwise, we may violate the majorization constraints (by a small fraction).
Such idea was also used in the FPTAS for the classic knapsack problem \cite{ibarra1975fast}.
}
It is important to note that $\tprofit_t$ is an integer bounded by $O(\dealnum/\epsilon)$
and the maximum total modified profit we can allocate is bounded by $O(\dealnum^2/\epsilon)$.

Everything is in place to describe the dynamic program.
Let $H(t,r,\tprofit)$ be the minimum total size of any allocation for subproblem $F(i,j,k)$ with the following set of additional constraints:
\begin{enumerate}
\item We can only use classes from $\{\deal_1,\ldots, \deal_t\}$;
\item Exactly $r$ different classes participate in the allocation;
\item The total modified profit of the allocation is $\tprofit$;
\item All constraints of LLS are also satisfied.
\end{enumerate}
Initially, $H(t,r,\tprofit)=0$ for $t,r,\tprofit=0$ and $H(t,r,\tprofit)=\infty$ for others.
The recursion of the dynamic program is as follows:
\begin{align*}
H(t,r,\tprofit) = \min \left \{
            \begin{array}{ll}
            H(t-1,r,\tprofit), & \hbox{If we decide $\alloc_i=0$}; \\
            H(t-1,r-1,\tprofit-\tprofit_t)+s_t, & \hbox{If $\deal_t\in C$ and we use $\deal_t$ and} \\
                                                                  & \hbox{$H(t-1,r-1,\tprofit-\tprofit_t)+s_t\leq N_i+\ldots+N_{i+r}$.}
            \end{array}\right.
\end{align*}
The correctness of the recursion is quite straightforward and we omit its proof.
We return the allocation $\alloc$ corresponding to $H(t,r,\tprofit)$ that has a finite value and $\tprofit$ is the highest.
The running time is bounded by $O((i-j)\times \dealnum\times \frac{\dealnum^2}{\epsilon})$
which is a polynomial.

\begin{lemma}
Suppose $\opt$ is optimal allocation corresponding to $F(i,j,k)$
and we know the value of $\opt_k$.
The profit of the allocation $\alloc$ the above dynamic program is at least $(1-\epsilon)F(i,j,k)$.
\end{lemma}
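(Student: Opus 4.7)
The plan is to follow the classical Ibarra--Kim rounding analysis, adapted to the block subproblem here. I would first observe that the lower level dynamic program, indexed by (prefix of classes, count $r$ of participating classes, target integer modified profit $\tprofit$), exactly enumerates the set of LLS-feasible allocations: for each such allocation $\alloc'$, its total modified profit $\sum_{t : \alloc'_t\ne 0}\tprofit_t$ is a non-negative integer bounded by $O(M^2/\epsilon)$, so the DP will consider the entry $H(M,r,\sum_t\tprofit_t[\alloc'_t\neq 0])$ and record its minimum total size. In particular, the true LLS optimum $\opt$ (for the given value of $\opt_k$) is among these candidates, so the DP returns an allocation $\alloc$ whose modified profit is at least the modified profit of $\opt$.

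Next I would translate this modified-profit inequality back to real profits using the definition $\tprofit_t = \lfloor 2M p_t s_t/(\epsilon\tF)\rfloor$. This gives for every participating class
\[
\frac{\epsilon\tF}{2M}\tprofit_t \,\le\, p_t s_t \,\le\, \frac{\epsilon\tF}{2M}\tprofit_t + \frac{\epsilon\tF}{2M}.
\]
Summing over the at most $M$ classes that participate in any LLS-feasible allocation, and writing $\mathrm{profit}(\cdot)$ for the true profit, one obtains
\[
\mathrm{profit}(\alloc)\,\ge\,\frac{\epsilon\tF}{2M}\sum_{t:\alloc_t\ne 0}\tprofit_t, \qquad \frac{\epsilon\tF}{2M}\sum_{t:\opt_t\ne 0}\tprofit_t\,\ge\, \mathrm{profit}(\opt)-\frac{\epsilon\tF}{2}.
\]
Chaining these with the DP guarantee and using $\tF\le F(i,j,k)=\mathrm{profit}(\opt)$ yields
\[
\mathrm{profit}(\alloc)\,\ge\,\mathrm{profit}(\opt)-\frac{\epsilon\tF}{2}\,\ge\,\Bigl(1-\frac{\epsilon}{2}\Bigr)F(i,j,k)\,\ge\,(1-\epsilon)F(i,j,k),
\]
which is the claimed bound.

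The subtle step I would be most careful about is the first one: arguing that every LLS-feasible allocation $\alloc'$ is in fact reachable by the DP, i.e.\ that enforcing the size bound $H(t-1,r-1,\tprofit-\tprofit_t)+s_t\le N_i+\cdots+N_{i+r-1}$ class by class really implements all majorization inequalities of \eqref{eq:4}. This is where one uses that within a block the participating sizes $s_t$ are fully determined (either $u_t$, $l_t$, or the prescribed $\opt_k$) together with Property~(2) of Lemma~\ref{lm:block}, so that class indices can be scanned in the same monotone order as in the warm-up dynamic program of Section~\ref{se:simdp}; the partial-sum check then coincides with the usual largest-first evaluation of majorization. Once this bookkeeping is verified, the discretized-profit argument above is routine and the stated lemma follows, with no further dependence on $\Nsum$ in the running time.
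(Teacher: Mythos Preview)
Your proposal is correct and follows essentially the same Ibarra--Kim rounding argument as the paper: both show that the DP maximizes total modified profit, then translate the floor bounds on $\tprofit_t$ back to actual profits, losing an additive term bounded via $\tF\le F(i,j,k)$. Your chaining is in fact slightly tighter (you lose $\epsilon\tF/2$ rather than $\epsilon\tF$), and your explicit remark that the partial-sum feasibility check only matches the majorization constraints when the fixed sizes $s_t$ are scanned in decreasing order is a point the paper glosses over; otherwise the two proofs are the same.
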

\begin{proof}
We use $I_t$ and $I^{\star}_t$ as the Boolean variables indicating whether
$\deal_t$ participates in the allocations $\alloc$ and $\opt$ respectively.
Since the dynamic program finds the optimal solution with respect to the modified profit, we have
\begin{align*}
\sum_t I_t \tprofit_t \geq \sum_t I^{\star}_t \tprofit_t.
\end{align*}
By the definition of the modified profit, we can see that
\begin{align*}
\sum_t I_t \Bigl(\frac{2\dealnum\profit_t s_t}{ \epsilon \tF} +1\Bigr)\geq
\sum_t I_t \Bigl\lfloor \frac{2\dealnum\profit_t s_t}{ \epsilon \tF} \Bigr\rfloor
\geq \sum_t I^{\star}_t \Bigl\lfloor \frac{2\dealnum\profit_t s_t}{ \epsilon \tF} \Bigr\rfloor
\geq \sum_t I^{\star}_t \Big(\frac{2\dealnum\profit_t s_t}{ \epsilon \tF} -1\Bigr).
\end{align*}
Simple manipulation gives us that
\begin{align*}
\sum_t I_t \profit_t s_t
\geq \sum_t I^{\star}_t \profit_t s_t - 2\dealnum \frac{\epsilon \tF}{ 2\dealnum} \geq (1-\epsilon) F(i,j,k).
\end{align*}
In the last inequality, we use $\tF\leq F(i,j,k)$.
This completes the proof.
\end{proof}

Lastly, we briefly sketch how we get rid of the assumption that $\alloc_k$ is known
(by losing at most an $\epsilon$ fraction of profit).
Enumerating all possible $\alloc_k$ values is not feasible since there are $O(u_k-l_k)$ possibilities in the worst case.
To achieve a polynomial running time, we only try the following set of possible values for $\alloc_k$:
$$D_k=\{0, l_k, u_k, N_i, N_j\}\cup \{\text{all integers in } [l_k, u_k]\cap [N_i, N_j] \text{ with the form} \lfloor(1+\epsilon)^h\rfloor, h\in \mathbb{Z}^+\}.
$$
Clearly, the size of $D_k$ is $O(\log \Nsum)$.
Moreover, for any possible $\alloc_k$ value, we can see that
there is a number in $D_k$ that is at most $\alloc_k$ and at least $(1-\epsilon)\alloc_k$.
Therefore, for any allocation $\alloc$, there is an allocation $\tilde\alloc$ where $\tilde\alloc_k\in D_k$ and the
profit of $\tilde\alloc$ is at least $1-\epsilon$ times the profit of $\alloc$.

\topic{The Higher Level Problem}
The higher level dynamic program is the same as \eqref{eq:dp}, except that we only use the $(1-\epsilon)$-approximation for $F(i,j,k)$.
The correctness of the overall algorithm follows the same line as before.
Since we have enforced constraints L1, L2 and L3, we can still prove Lemma~\ref{lm:nooverlap}
(all arguments in the proof still carry through).
Because we have a $(1-\epsilon)$-approximation for each $F(i,j,k)$
and $D(j,k)$ is a sum of several such $F$ values,
we also have a $(1-\epsilon)$-approximation for $D(j,k)$.
Moreover, the running time for solving this dynamic program is bounded by a polynomial.
In summary, we have the following theorem.

\begin{theorem}
\label{thm:fptas}
There is an FPTAS for \GAP.
In other words, for any fixed constant $\epsilon>0$,
we can find a feasible allocation with revenue at least $(1-\epsilon)\OPT$ in time $\poly(\dealnum, \slotnum,  \log \Nsum, \frac{1}{\epsilon})$
where $\OPT$ is the optimal revenue.
\end{theorem}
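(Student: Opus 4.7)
The plan is to combine three pieces: the lower-level FPTAS analyzed in the preceding lemma (which gives a $(1-\epsilon)$-approximation to $F(i,j,k)$ when $\opt_k$ is known), the discrete guessing set $D_k$ for $\alloc_k$ (which costs another $(1-\epsilon)$ factor but only $O(\log \Nsum)$ enumeration), and the higher-level recursion \eqref{eq:dp} with the approximate values $\tF(i,j,k)$ plugged in. To end up with the final ratio $(1-\epsilon)$, I would first rescale the internal parameter to $\epsilon' = \epsilon/3$; since the losses compose as $(1-\epsilon')^2 \geq 1-2\epsilon' \geq 1-\epsilon$, this cleanly absorbs all rounding.

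Concretely, I would define $\tF(i,j,k) = \max_{v \in D_k} H_v(i,j,k)$, where $H_v$ is the profit returned by the lower-level DP when forced to set $\alloc_k = v$. For each fixed $v$, the preceding lemma gives profit at least $(1-\epsilon')F_v(i,j,k)$, and the density property of $D_k$ guarantees some $v \in D_k$ with $v \leq \opt_k$ and $v \geq (1-\epsilon')\opt_k$; taking the max over $D_k$ therefore yields $\tF(i,j,k) \geq (1-\epsilon')^2 F(i,j,k)$. The time to fill one entry is $\poly(\dealnum, \log \Nsum, 1/\epsilon)$, and there are $\poly(\slotnum, \dealnum)$ entries, so the whole table is produced in polynomial time.

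Next I would run \eqref{eq:dp} unchanged except with $\tF$ in place of $F$, and argue both inequalities separately. For the lower bound on the returned value $D(\slotnum,\dealnum)$, note that the exact recursion already equals $\OPT$ (by the theorem in Section~\ref{sec:general}), and $D(\slotnum,\dealnum)$ is a sum of at most $\slotnum$ terms of the form $\tF(i_{k-1}+1, i_k, \alpha(B_k))$; since each is at least $(1-\epsilon')^2$ times the corresponding $F$ value, the whole sum loses only a factor $(1-\epsilon')^2 \geq (1-\epsilon)$. For the upper bound (i.e.\ the feasibility of the output), the key observation is that constraints L1--L3 are strengthenings, not relaxations, of \eqref{eq:1}--\eqref{eq:3}; together with the relaxation of \eqref{eq:5} to an inequality, they preserve every allocation that Lemma~\ref{lm:block} attaches to a block of $\opt$, so $\tF$ is still $\geq (1-\epsilon')^2$ times the block profit in $\opt$, while every argument in Lemma~\ref{lm:nooverlap} still works on the restricted classes.

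The main obstacle is the feasibility argument: I must certify that the concatenation of the per-block allocations picked by the higher-level DP is a single legal vector $\alloc$, i.e.\ that no class $\deal_t$ appears in two blocks even though we only guarantee approximate optimality within each block. This is exactly where constraints L1 and L2, which restrict attention to classes with $u_t \in [N_j, N_i]$ (resp.\ $l_t \in [N_j, N_i]$), earn their keep: together with Lemma~\ref{lm:range} they reproduce the three-case contradiction in the proof of Lemma~\ref{lm:nooverlap} verbatim, for approximate block allocations as well as for exact ones. Once this is verified, combining the approximation ratio, the feasibility of the returned $\alloc$, and the polynomial time bound yields the claimed FPTAS.
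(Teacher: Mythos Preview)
Your proposal is correct and follows essentially the same approach as the paper: approximate each $F(i,j,k)$ via the lower-level DP plus the geometric guessing set $D_k$, plug these into the unchanged higher-level recursion \eqref{eq:dp}, and invoke Lemma~\ref{lm:nooverlap} (which survives because L1--L3 directly enforce the range bounds that Lemma~\ref{lm:range} supplied in the exact case). The paper's own argument is terser and does not spell out the $\epsilon' = \epsilon/3$ rescaling you add, but otherwise the structure is identical.
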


 \section{Conclusions and Future Work}
 \label{sec:con}
We have formulated and studied the group-buying allocation problem: finding an optimal allocation for a group-buying  website to maximize its revenue. We have designed two dynamic programming based algorithms, which can find an optimal allocation in pseudo-polynomial time. An FPTAS has also been derived based on the proposed algorithms,
which can handle instances of larger size.
In fact, our preliminary simulation results show that with a modern PC
the algorithm proposed in Section \ref{sec:general} can handle a reasonable size
instance: a group-buying  website with ten slots, hundreds of deals and millions of website visitors.
The FPTAS can handle even larger instances: with tens of slots, hundreds of deals and hundreds of millions of website visitors.
We are conducting more comprehensive experiments and
the detailed report is deferred to the a future version of this work.

There are many research issues related to group-buying allocation which need further investigations. (1) We have studied the offline allocation problem and assumed that the traffic $N$ of a group-buying website is known in advance and all the deals are available before allocation. It is interesting to study the online allocation problem when the traffic is not known in advance
and both website visitors and deals arrive online one by one.
(2) We have assumed that the position discount $\gamma_i$ of each slot and conversion rate $\lambda_i$ of each deal are given. It is worthwhile to investigate how to maximize revenue with unknown position discount and conversion rate.
(3) We have not considered the strategic behaviors of merchants and consumers/visitors.
It is of great interest to study the allocation problem in the setting of auctions.

\bibliographystyle{acmsmall}
\bibliography{groupon}

\appendix
\newcommand{\size}{N}
\newcommand{\sol}{\mathcal{SOL}}
\newcommand{\sfM}{{\mathsf M}}
\newcommand{\sfA}{{\mathsf A}}
\newcommand{\sfB}{{\mathsf B}}

\section{Handling General Nonincreasing Target Vector}\label{app:mono}
In this section, we consider the general case
where $N_1\geq N_2\geq \ldots \geq N_\slotnum$ and some inequalities
hold with equality.
Our previous algorithm does not work here since the proof of Lemma~\ref{lm:nooverlap} relies on strict inequalities.
In the general case, the lemma does not hold and we can not
guarantee that no class is used in more than one blocks.
In our experiment, we also found some concrete instances that make the previous algorithm fail.
In fact, this slight generalization introduces a lot of complications, especially in the boundary cases.
We only briefly sketch some essential changes and omit those tedious (but not difficult) details.

We distinguish two types of blocks, E-blocks and F-blocks.
F-blocks are the same as the blocks we defined in Section~\ref{sec:general} with one additional constraint:
Suppose the F-block spans $N_i,\ldots, N_j$. For any class $\deal_k$ in this block, $\alloc_k> N_j$ if $N_j=N_{j+1}$,
and $\alloc_k< N_i$ if $N_i=N_{i-1}$,
To handle the case where there are some consecutive equal $N$ values, we need E-blocks.
An E-block consists of a maximal set of consecutive classes, say $\alloc_{[i]},\ldots, \alloc_{[j]}$, such that
they are of equal size and form the same number of tight segments.
In other words, we have $\sum_{k=1}^{i-1}\alloc_{[k]}=\sum_{k=1}^{i-1}N_k$ and
$\alloc_{[i]}=\alloc_{[i+1]}=\ldots=\alloc_{[j]}=N_i=\ldots=N_j$.
An E-block differ from an F-block in that a E-block may contain more than one fractional classes.
We can still define $\alpha(E_i)$ and $\beta(E_i)$ for an E-block $E_i$.
But this time we may not have  $\alpha(E_i)\leq \beta(E_i)$ due to the presence of multiple fractional classes.

Now, we describe our new dynamic program.
Let $E(i,j,k_1,k_2)$ be the optimal allocation for an E-block that
spans $N_i, \ldots, N_j$ ($N_i=N_{i+1}=\ldots=N_j)$
and $\beta(E_i)\geq k_1$ and $\alpha(E_i) \leq k_2$.
Computing $E(i,j,k_1,k_2)$ can be done by a simple greedy algorithm
that processes the classes in decreasing order of their profits.
We need to redefine the higher level dynamic program.
 $D(i,k,F)$ represents the optimal allocation for the following subproblem :
we have to fill up exactly $N_1,N_2,\ldots, N_i$ (i.e., $\sum_{j}\alloc_j =\sum_{j=1}^i N_j$)
and $\alpha(B)\leq k$ where $B$ is the last block and is an F-block
(if $i=\slotnum$, we allow $\sum_{j}\alloc_j \leq \sum_{j=1}^i N_j$).
$D(i,k,E)$ represents the optimal allocation of the same subproblem
except the last block $B$ is an E-block.

The new recursion is as follows.
We first deal with the case where the last block is an F-block.
\begin{align}
\label{eq:dp1}
D(i,k,F) = \max_{j < i,l < k}\left \{
            \begin{array}{ll}
            D(j,l,F)+F(j+1,i,k) &  \\
            D(j,l,E)+F(j+1,i,k) & \mbox{}
            \end{array}\right.
\end{align}
The other case is where the last block is an E-block.
\begin{align}
D(i,k,E) = \max_{j < i,l \leq k}\left \{
            \begin{array}{ll}
            D(j,l,F) + E(j+1,i,l+1,k) &  \\
            D(j,l,E) + E(j+1,i,l+1,k) & \mbox{if } N_i\neq N_j
            \end{array}\right.
\end{align}
We can show that for two consecutive (F or E) blocks $B_k$ and $B_{k+1}$ and  in the optimal allocation,
we have $\alpha(B_k)< \beta(B_{k+1})$ (even though we may not have $\alpha(B_k)\leq \beta(B_k)$ for E-blocks).
This is why we set the third argument in $E(j+1,i,l+1,k)$ to be $l+1$.
We can also argue that no two blocks would use items from the same class. 
The proof is similar with Lemma~\ref{lm:nooverlap}.
Moreover, we need to be careful about the boundary cases where $\alpha(E)$ and $\beta(E)$ are undefined.
In such case, their default values need to be chosen slightly differently. For clarity, we omit those details.

\section{A $(\frac{1}{2}-\epsilon)$-Approximation When $\N$ is Non-monotone}
\label{app:approximation}

In the section, we provide a $\frac{1}{2}-\epsilon$ factor approximation algorithm
for a generalization of \GAP\ where the target vector $\N=\{N_1,\ldots, N_\slotnum\}$ may not be monotone
($N_1\geq N_2\geq\ldots\geq N_\slotnum$ may not hold).
We still require that
$\sum_{t=1}^{r} \alloc_{[t]}\leq \sum_{t=1}^{r} N_t$ for all $r$.
Although we are not aware of an application scenario that would require the full generality,
the techniques developed here, which are quite different from those in Section~\ref{sec:general},
may be useful in handling other variants of \GAP\ or problems with similar constraints.
So we provide this approximation algorithm for theoretical completeness.

Note that our previous algorithms does not work for this generalization.
In fact, we even conjecture that the generalized problem is strongly NP-hard
(So it is unlike to have a pseudo-polynomial time algorithm).
Next, we present our algorithm assuming $\Nsum$ is polynomially bounded.
At the end, we discuss how to remove this assumption briefly.

We first transform the given instance to a simplified instance.
The new instance enjoys a few extra nice properties which make
it more amenable to the dynamic programming technique.
In the next section, we present a dynamic program for the simplified instance
that runs in time $\poly(\dealnum, \slotnum, \Nsum)$ for any fixed constant $0<\epsilon<\frac{1}{3}$
($\epsilon$ is the error parameter).

\subsection{Some Simplifications of The Instance}

Let $\OPT$ be the optimal value of the original problem.
We make the following simplifications.
We first can assume that
$
u_i/l_i\leq 1/\epsilon$, for every class $\deal_i$.
Otherwise, we can replace $\deal_i$ by a collection of classs
$\deal_{i1}, \deal_{i2}, \ldots, \deal_{ik}$ where
$l_{i1}=l_i, u_{i1}=l_i/\epsilon$,
$l_{i2}=l_i/\epsilon+1, u_{i2}=l_i/\epsilon^2$,
$l_{i3}=l_i/\epsilon^2+1, \ldots$,
$l_{ik}=l_i/\epsilon^{k-1}+1,u_{ik}=u_i$.
In the following lemma, we show that the optimal solution of the new instance is
at least $\OPT$ and can be transformed into a
feasible solution of the original problem without losing too much profit.

\begin{lemma}
The optimal allocation of the new instance is
at least $\OPT$.
Any allocation with cost $\sol$ of the new instance
can be transformed into a
feasible solution of the original instance with cost at least $(\frac{1}{2}-\epsilon)\sol$ in polynomial time.
\end{lemma}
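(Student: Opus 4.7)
The plan is to handle the two claims of the lemma separately. For the first, my goal is to exhibit a feasible allocation of the new instance whose total profit equals $\OPT$. Given the optimal $\opt$ of the original instance, I would define an allocation $\alloc'$ for the new instance as follows: if $\opt_i=0$, set every subclass allocation to $0$; otherwise, observe that the intervals $[l_{is},u_{is}]$ for $s=1,\ldots,k$ partition $[l_i,u_i]$ into consecutive pieces, so there is a unique subclass $s$ with $\opt_i\in[l_{is},u_{is}]$, and I assign $\opt_i$ to that subclass (all other subclasses receive $0$). The nonzero entries of $\alloc'$ form the same multiset as those of $\opt$, so every majorization constraint transfers verbatim; since each subclass inherits the per-item profit $p_i$ of its parent, the profits agree, establishing the first claim.

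For the second claim, let $\alloc''$ be a feasible allocation of the new instance with profit $\sol$, and for each class $\deal_i$ let $S_i=\{s:\alloc''_{is}>0\}$. I would construct an original allocation $\alloc$ by keeping only the \emph{topmost} selected subclass of each class: set $\alloc_i=\alloc''_{is^*_i}$ where $s^*_i=\max S_i$ (and $\alloc_i=0$ if $S_i=\emptyset$). Feasibility is immediate: $\alloc_i\in[l_{is^*_i},u_{is^*_i}]\subseteq[l_i,u_i]$, and the multiset of nonzero entries of $\alloc$ is a subset of the nonzero entries of $\alloc''$, so the majorization partial sums can only decrease.

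The heart of the argument is the per-class inequality $\alloc''_{is^*_i}\geq(\tfrac12-\epsilon)\sum_s\alloc''_{is}$. By construction, for every $s<s^*_i$ we have $\alloc''_{is}\leq u_{is}\leq l_i/\epsilon^s$, while $\alloc''_{is^*_i}\geq l_{is^*_i}\geq l_i/\epsilon^{s^*_i-1}$. Summing the geometric series of upper bounds yields
\begin{equation*}
\sum_{s<s^*_i}\alloc''_{is}\;\leq\;l_i\sum_{s=1}^{s^*_i-1}\frac{1}{\epsilon^s}\;\leq\;\frac{l_i/\epsilon^{s^*_i-1}}{1-\epsilon}\;\leq\;\frac{\alloc''_{is^*_i}}{1-\epsilon},
\end{equation*}
so $\sum_s\alloc''_{is}\leq\alloc''_{is^*_i}\cdot(2-\epsilon)/(1-\epsilon)$, giving ratio at least $(1-\epsilon)/(2-\epsilon)\geq\tfrac12-\epsilon$. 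Multiplying by the common per-item profit $p_i$ and summing over $i$ yields a total profit of at least $(\tfrac12-\epsilon)\sol$ for $\alloc$, which completes the second claim.

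The main obstacle I anticipate is handling the boundary cases that the clean geometric calculation glosses over: the $+1$ offsets in $l_{i,s+1}=l_i/\epsilon^s+1$, the trivial case $s^*_i=1$ where nothing is dropped and the ratio is $1$, and the rightmost subclass where $u_{ik}=u_i$ may be strictly less than $l_i/\epsilon^k$. None of these can hurt the bound (they can only strengthen it), but they require a little bookkeeping to state cleanly. The transformation itself is clearly polynomial time, since it only inspects each subclass entry once.
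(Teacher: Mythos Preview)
Your proposal is correct and follows essentially the same route as the paper: keep only the highest-indexed selected subclass $s^*_i$ (the paper calls it $h_i$), bound the discarded mass by the geometric series $\sum_{s<s^*_i}u_{is}\le \alloc''_{is^*_i}/(1-\epsilon)$, and conclude the kept profit is at least $(1-\epsilon)/(2-\epsilon)\ge \tfrac12-\epsilon$ times $\sol$. If anything, your write-up is a bit more careful than the paper's about the feasibility transfer in both directions and about the boundary cases you flag.
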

\begin{proof}
The first part follows easily from the fact
that any feasible allocation of the original instance is also feasible in the new instance.
Now, we prove the second part.
An allocation $\alloc$ for the new instance may be infeasible for the original instance if
we use items from multiple classes out of $\deal_{i1},\ldots, \deal_{ik}$.
Assume $h_i=\max\{j \mid \alloc_{ij}> 0\}$ for all $i$.
We can obtain a feasible allocation for the original problem by only
using items from $\deal_{ih_i}$ for all $i$.
The loss of profit can be bounded by
$$
\sum_i \sum_{j=1}^{h_i-1} u_{ij} \leq \frac{1}{1-\epsilon} u_{i(h_i-1)} \leq \frac{1}{1-\epsilon} l_{ih_i} \leq \frac{1}{1-\epsilon} \alloc_{ih_i}.
$$
The profit we can keep is at least $\sum_i \alloc_{ih_i}$. This proves the lemma.
\qed
\end{proof}

For each class $\deal_i$, let $t_i$ be the largest integer of the form
$\lfloor  (1+\epsilon)^k \rfloor$ that is at most $u_i$.
Let the new upper bound be $\max(l_i, t_i)$.
We can easily see that after the modification of $u_i$,
the optimal value of the new instance is at least $(1-\epsilon)\OPT$.
Moreover, we have the following property:
For each positive integer $T$,
let $\sfM(T)=\{\deal_i\mid l_i\leq T\leq u_i\}$.
Because for all $\deal_i\in \sfM_T$, $u_i$ is either equal to $l_i$ (also equal to $T$)
or in the form of $\lfloor  (1+\epsilon)^k \rfloor$, and $u_i/l_i\leq 1/\epsilon$, we have the following property:
\begin{enumerate}
\item[P1.]
All classes in $\sfM_T$ has at most $O(\log_{1+\epsilon}\frac{1}{\epsilon})=O(1)$ different
upper bounds any any fixed constant $\epsilon>0$.
\end{enumerate}

\begin{corollary}
\label{cor:simplification}
Assuming $\Nsum$ is polynomially bounded,
a polynomial time exact algorithm for the simplified instance (satisfying P1) implies a $\frac{1}{2}-\epsilon$ for the original instance
for any constant $\epsilon>0$.
\end{corollary}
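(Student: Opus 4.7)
The plan is simply to compose the two reductions stated just before the corollary, track the error factor at each step, and rescale $\epsilon$ at the end. Starting from the given instance $\calI$ of the generalized \GAP\ with non-monotone target vector $\N$ and optimum $\OPT$, I would first apply the class-splitting transformation to obtain an intermediate instance $\calI'$ in which $u_i/l_i\leq 1/\epsilon$ for every class. The preceding lemma supplies exactly what is needed: the optimum of $\calI'$ is at least $\OPT$, and any feasible allocation for $\calI'$ with profit $\sol$ can be converted in polynomial time into a feasible allocation for the original instance with profit at least $(\tfrac{1}{2}-\epsilon)\sol$, by keeping only items from the sub-class $\deal_{ih_i}$ with the largest index used.

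Next, I would apply the upper-bound-rounding step, replacing each $u_i$ in $\calI'$ by $\max(l_i,t_i)$, where $t_i$ is the largest integer of the form $\lfloor(1+\epsilon)^k\rfloor$ not exceeding $u_i$. Call the resulting instance $\calI''$. The paragraph immediately above the corollary already observes that the optimum of $\calI''$ is at least $(1-\epsilon)$ times the optimum of $\calI'$, and hence at least $(1-\epsilon)\OPT$. It also establishes property~P1: for any target level $T$, every class in $\sfM(T)$ has an upper bound that is either equal to its lower bound (and hence to $T$) or of the form $\lfloor(1+\epsilon)^k\rfloor$, and since $u_i/l_i\leq 1/\epsilon$ after the first reduction, only $O(\log_{1+\epsilon}(1/\epsilon))=O_\epsilon(1)$ distinct upper bounds can occur.

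Now I would invoke the hypothesized polynomial-time exact algorithm on $\calI''$ to obtain an allocation $\alloc^\star$ with profit at least $(1-\epsilon)\OPT$. Since $\calI''$ differs from $\calI'$ only in shrinking upper bounds, while the target vector $\N$ is unchanged, $\alloc^\star$ is automatically feasible for $\calI'$. Applying the back-conversion from the first lemma then yields a feasible allocation for $\calI$ with profit at least $(\tfrac{1}{2}-\epsilon)(1-\epsilon)\OPT \geq (\tfrac{1}{2}-\tfrac{3\epsilon}{2})\OPT$. Running the whole pipeline with parameter $\epsilon/3$ in place of $\epsilon$ then gives the claimed $(\tfrac{1}{2}-\epsilon)$-approximation. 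The total running time is polynomial in $\dealnum$, $\slotnum$, $\Nsum$ and $1/\epsilon$, using the assumption that $\Nsum$ is polynomially bounded.

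The only non-mechanical point is bookkeeping on feasibility: one needs that the reductions move feasibility in the right direction so that an optimum of $\calI''$ lifts to something feasible (and almost-optimal) for $\calI$. Since neither reduction alters $\N$, the majorization constraint $\alloc\preceq\N$ is preserved going down, and the back-conversion in the first lemma only discards items from the allocation so the sorted prefix sums can only decrease; hence majorization survives back to $\calI$. I do not expect any real obstacle beyond this routine check.
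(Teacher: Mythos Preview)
Your proposal is correct and matches the paper's intended argument: the corollary is stated without proof precisely because it is the routine composition of the class-splitting lemma (giving the $(\tfrac12-\epsilon)$ back-conversion) with the upper-bound rounding (losing at most a $(1-\epsilon)$ factor and yielding P1), followed by a final rescaling of $\epsilon$. Your feasibility bookkeeping and the observation that the back-conversion only discards items (hence preserves $\alloc\preceq\N$) are exactly the checks the paper leaves implicit.
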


\subsection{An Dynamic Program For the Simplified Instance}

Now, we present a dynamic program for the simplified instance.
Our dynamic program runs in time $\poly(\dealnum, \slotnum, \Nsum)$.

We first need to define a few notations.
We use $\target(i, y)$ to denote the vector
$$
\{\size_1,\size_2,\ldots, \size_{i-1},\size_i+y\}
$$
for every $k$ and $x$.
Each $\target(k, x)$ will play the role of the target vector for
some subinstance in the dynamic program.
Note that the original target vector $\target$ can be written as
$\target(n,0)$.
For each positive integer $T$,
we let $\sfM_T=\{\deal_i\mid l_i\leq T\leq u_i\}$
and $\sfM^t_T=\{\deal_i\mid l_i\leq T\leq u_i=t\}$.
Note that $\sfM^t_T$ are not empty for  $O(\log M)$
different $t$ due to P1.
Let $\sfA^t_T=\{\deal_i\mid l_i=T, u_i=t\}$
and $\sfB^t_T=\{\deal_i\mid l_i<T, u_i=t\}$.
Note that
$\sfM_T=\cup_t \sfM^t_T$ and
$\sfM^t_T=\sfA^t_T\cup \sfB^t_T$.
Due to P1, we can see that for a fixed $T$,
there are at most $O(1)$ different $t$ values
such that $\sfA^t_T$ and $\sfB^t_T$ are nonempty.

We define a subinstance $I(T, \{A_t\}_t, \{B_t\}_t, k, x)$ with the following interpretation of the parameters:
\begin{enumerate}
\item $T$: We require $\alloc_i\leq T$ for all $i$ in the subinstance.
\item $\{A_t\}_t, \{B_t\}_t$: Both $\{A_t\}$ and $\{B_t\}$ are collections of subsets of classes.
Each $A_t \subset \sfA^t_T$ ($B_t \subset \sfB^t_T$ resp.) consists of
the least profitable $|A_t|$ classes in $\sfA^t_T$ ($|B_t |$ classes in $\sfB^t_T$ resp.).
We require that among all classes in $\sfA^t_T$ ($\sfB^t_T$ resp.), only
those classes in $A_t$ ($B_t$ resp.) may participate in the solution.
If $\deal_i$ participates in the solution, we must have $l_i\leq \alloc_i\leq u_i$.
Basically, $\{A_t\}, \{B_t\}$ capture the subset of classes in $\sfM_T$ that
may participates in the solution of the subinstance.
Since there are at most $O(1)$ different $t$ such that $\sfA^t_T$ and $\sfB^t_T$ are nonempty,
we have at most $n^{O(1)}$ such different $\{A_t\}$s and $\{B_t\}$s (for a fixed $T$).
\item Each class $\deal_i$ with $u_i<T$ may participate in the solution.
\item $k,y$: $\target(k,y)$ is the target vector for the subinstance.
\end{enumerate}
We use $\OPT(T, \{A_t\}, \{B_t\}_t, k, x)$ to denote the optimal solution for the subinstance $I(T, \{A_t\}, \{B_t\}_t, k, x)$.

Now, we present the recursions for the dynamic program.
In the recursion,
suppose we want to compute the value $D(T, \{A_t\}, \{B_t\}, i,x)$.
Let $\deal_{a(t)}$ be the $A_t$th least profitable class in $\sfA^t_T$
and $\deal_{b(t)}$ be the $B_t$th least profitable class in $\sfB^t_T$.
We abuse the notation $\{A_{t'}-1\}_t$ to denote
the same set as $\{A_t\}_t$ except that the subset $A_{t'}$
is replaced with $A_{t'}\setminus \deal_{a(t)}$
(i.e., the most profitable class in $A_{t'}$ is removed).
The value of $D(T, \{A_t\}, \{B_t\}, \beta, i,y)$ can be computed as follows:
\begin{align*}
\max_{t'} \left\{
   \begin{array}{ll}
     D(T, \{A_{t'}-1\}_t, \{B_t\}_t, i-1, y+N_i-T)+ p_{a(t)} T, & \hbox{if $A_t>1\wedge y+N_i-T\geq 0$;}\text{\quad(A)}\\
     D(T, \{A_t\}_t, \{B_{t'}-1\}_t, i-1, y+N_i-T)+ p_{b(t)} T, & \hbox{if $B_t>1\wedge y+N_i-T\geq 0$;}\text{\quad(B)}\\
     D(T-1, \{A'_t\}_t, \{B'_t\}_t, i,y), & \hbox{see explanation below;} \text{\quad(C)} \\
   \end{array}
 \right.
\end{align*}
(A) captures the case that $\deal_{a(t)}$ participates in the optimal solution of the subinstance
and $\alloc_{a(t)}=T$.
Similarly, (B) captures the decision $\alloc_{b(t)}=T$.
In case (C), 
$\{A'_t\}_t, \{B'_t\}_t$ are obtained from $\{A_t\}_t, \{B_t\}_t$ as follows:
\begin{enumerate}
\item For any $t\geq T$ and $|B_t|>0$,
let $A'_t \subset B_t$ be the set of classes with lower bound $T-1$
and $B'_t$ be $B_t\setminus A'_t$.
\item
We need to include all classes with upper bound $T-1$. That is to let
$A'_{T-1}=\calA_{T-1}^{T-1}$ and
$B'_{T-1}=\calB_{T-1}^{T-1}$.
\end{enumerate}
Note that this construction simply says $\{A'_t\}$ and $\{B'_t\}$
should be consistent with $\{A_t\}$ and $\{B_t\}$.
(C) captures the case that $\alloc_i<T$ for all $i\in \cup_t\sfA^t_T\cup \sfB^t_T$.
This finishes the description of the dynamic program.
We can see the dynamic program runs in time $\poly(\dealnum, \slotnum, \Nsum)$
since there are at most $O(\Nsum^2\dealnum^{O(1)})$ different subinstances
and computing the value of each subinstance takes constant time.

Now, we show why this dynamic program computes the optimal value
for all subinstances defined above.
In fact, by a careful examination of the dynamic program,
we can see that it suffices to show the following two facts
in subinstance $I(T, \{A_t\}, \{B_t\}, k, y)$.
Recall $\deal_{a(t)}$ is the $A_t$th cheapest class in $\sfA^t_T$
and $\deal_{b(t)}$ is the $B_t$th cheapest class in $\sfB^t_T$.
Fix some $t\geq T$, we have that
\begin{enumerate}
\item
Either $\alloc_i=0$ for all $i\in \sfA^t_T$ or  $\alloc_{a(t)}=T$.
\item
Either $\alloc_i<T$ for all $i\in \sfB^t_T$ or  $\alloc_{b(t)}=T$.
\end{enumerate}
For each $i\in \sfA^t_T$, we can have either $\alloc_i=0$ or $\alloc_i=T$
(since both the lower and upper bounds are $T$).
Hence, if some $\alloc_i$ is set to be $T$, it is better to be
the most profitable one in $A_t$, i.e., $\deal_{a(t)}$.
This proves the first fact.
To see the second fact, suppose that $\alloc_{b(t)}<T$ but
$\alloc_j=T$ for some cheaper $\deal_j\in \sfB^t_T(B_t)$  (i.e., $p_j<p_{b(t)}$)
in some optimal solution of the subinstance.
By increasing $\alloc_{b(t)}$ by $1$ and decreasing $\alloc_j$ by $1$,
we obtain another feasible allocation with a strictly higher profit,
contradicting the optimality of the current solution.
Having proved the correctness of the dynamic program,
we summarize our result in this subsection in the following lemma.
\begin{lemma}
\label{lm:dp}
There is a polynomial time algorithm for the problem if the given instance satisfies P1.
\end{lemma}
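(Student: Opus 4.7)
The plan is to verify the two claims implicit in the lemma: that the dynamic program just described terminates in time $\poly(\dealnum, \slotnum, \Nsum)$, and that it correctly computes $\OPT$ for every subinstance, so that the top-level call (with $T$ initialized to $T_{\max}:=\max_i u_i$, full $\{A_t\}=\{\sfA^t_{T_{\max}}\}_t$, $\{B_t\}=\{\sfB^t_{T_{\max}}\}_t$, $i=\slotnum$, $y=0$) returns the optimal revenue. First I would pin down the base cases ($T=0$, $i=0$, or all $A_t,B_t$ empty with no class of smaller upper bound remaining, return $0$; infeasible states return $-\infty$) and then bound the number of reachable states. The ranges are $T\in[0,\Nsum]$, $i\in[0,\slotnum]$, and $y\in[-\Nsum,\Nsum]$; for each fixed $T$, Property P1 implies that only $O(1)$ of the $\sfA^t_T,\sfB^t_T$ are nonempty, and since by construction each $A_t$ (resp.\ $B_t$) always consists of the $|A_t|$ (resp.\ $|B_t|$) least profitable members of its ambient set, it is parametrized by its cardinality alone. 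This gives $\poly(\dealnum,\slotnum,\Nsum)$ states with $O(1)$ work per transition.

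For correctness I would do a double induction, outermost on $T$ decreasing from $T_{\max}$ and innermost on $i$ decreasing from $\slotnum$. The inductive step hinges on the two exchange facts already sketched: (i)~if some class in $A_t$ is allocated value $T$ then swapping with the most profitable $\deal_{a(t)}\in A_t$ preserves feasibility (both have $l=u=T$) and can only increase profit, so without loss of generality $\alloc_{a(t)}=T$; (ii)~if some class in $B_t$ is allocated value $T$ then swapping with $\deal_{b(t)}\in B_t$ is legal (both intervals $[l,u]$ contain $T$) and again is profit-nondecreasing. These two facts partition every optimal allocation of the subinstance into exactly three mutually exhaustive cases matching branches (A), (B), and (C) of the recursion: either value $T$ is assigned to $\deal_{a(t')}$ for some $t'$, or to $\deal_{b(t')}$ for some $t'$, or no still-available class is assigned value $T$ and we may restrict to $\{\alloc_i<T\}$, i.e.\ pass to the $T-1$ subinstance. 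A direct majorization check validates the slot-accounting update $y\mapsto y+N_i-T$ in (A) and (B).

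The main obstacle I expect is in branch (C), where I must verify that the re-grouping $\{A_t\}\mapsto\{A'_t\}$ and $\{B_t\}\mapsto\{B'_t\}$ preserves the invariant that $A'_t$ (resp.\ $B'_t$) is the set of the $|A'_t|$ (resp.\ $|B'_t|$) least profitable classes in $\sfA^t_{T-1}$ (resp.\ $\sfB^t_{T-1}$). When $T$ decreases to $T-1$, two simultaneous migrations occur: classes in $B_t$ with lower bound equal to $T-1$ leave $\sfB^t_T$ and enter $\sfA^t_{T-1}$, and any class with $u_i=T-1$ (which was forbidden before since $u_i<T$) must now be enrolled into $\sfA^{T-1}_{T-1}\cup\sfB^{T-1}_{T-1}$. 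I would therefore define $A'_t:=\{\deal_j\in B_t:l_j=T-1\}$ and $B'_t:=B_t\setminus A'_t$ for $t\geq T$, and set $A'_{T-1}:=\calA^{T-1}_{T-1}$, $B'_{T-1}:=\calB^{T-1}_{T-1}$. A short profit-ordering argument, using that $B_t$ was already the cheapest $|B_t|$ elements of $\sfB^t_T$, confirms that each $A'_t,B'_t$ remains the cheapest of its new ambient set. Once this invariant is established, the induction closes, giving $D(\cdot)=\OPT$ and hence a $\poly(\dealnum,\slotnum,\Nsum)$-time exact algorithm under P1, as required.
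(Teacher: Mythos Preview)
Your proposal is correct and follows essentially the same approach as the paper: you bound the state space using Property P1 exactly as the paper does, and your two exchange facts (i) and (ii) are precisely the paper's facts (1) and (2) justifying branches (A)--(C). Your additional care in verifying that the ``least-profitable'' invariant on $A'_t,B'_t$ survives the regrouping in branch (C) is a detail the paper glosses over, and your argument for it is sound; the one minor imprecision is the phrase ``both have $l=u=T$'' in (i), which should really read ``both have effective range $\{0,T\}$ in the subinstance'' since $u_i=t$ may exceed $T$, but the substance of the swap argument is unaffected.
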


Combining Corollary~\ref{cor:simplification} and Lemma~\ref{lm:dp},
we obtain the main result in this section.
Our algorithm runs in pseudo-polynomial time.
To make the algorithm runs in truly polynomial time, we can use the technique
developed in Section~\ref{sec:fptas} using the modified profits as one dimension of the dynamic program
instead of using the total size. The profit loss incurred in this step can be bounded by $\epsilon\cdot \OPT$ in the same way.
The details are quite similar to those in Section~\ref{sec:fptas} and we omit them here.

\begin{theorem}
\label{thm:ptas}
For any constant $\epsilon>0$,
there is a factor $\frac{1}{2}-\epsilon$ approximation algorithm for \GAP\ even
when the target vector $\target$ is not monotone.
\end{theorem}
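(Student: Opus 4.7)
The plan is to assemble Theorem~\ref{thm:ptas} from the two building blocks developed just above: the instance simplification (Corollary~\ref{cor:simplification}) and the dynamic program for simplified instances (Lemma~\ref{lm:dp}). First I would observe that putting these two together immediately gives a $(\frac{1}{2}-O(\epsilon))$-approximation whose running time is $\poly(\dealnum,\slotnum,\Nsum)$, i.e., pseudo-polynomial. The remaining task is to remove the dependence on $\Nsum$ so that the running time becomes polynomial in the actual input size $O(\dealnum + \slotnum \log \Nsum)$ and $1/\epsilon$.

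To do so, I would mimic the rounding step from Section~\ref{sec:fptas}. The DP of Lemma~\ref{lm:dp} uses the offset $y$ (essentially a cumulative ``size slack'') as one of its state dimensions, and that is the only parameter that forces the table to be of size $\Omega(\Nsum)$. Following the idea used in \cite{ibarra1975fast} and already deployed in Section~\ref{sec:fptas}, I would replace this size-based dimension with a discretized profit dimension: letting $\tilde F$ be an easily computable lower bound on the DP's objective (e.g., the best single-class allocation satisfying the simplified-instance constraints), redefine each class's contribution to the DP state as $\tilde Q_t = \lfloor \frac{2\dealnum p_t s_t}{\epsilon \tilde F}\rfloor$, which is an integer bounded by $O(\dealnum/\epsilon)$, so the total modified profit is $O(\dealnum^2/\epsilon)$. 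The DP then stores, for each configuration of $(T,\{A_t\},\{B_t\},i,\tilde q)$, the minimum feasible size offset, and we still enforce the majorization constraints in the transitions exactly as before (this is precisely why we discretize profits rather than sizes, as emphasized in Section~\ref{sec:fptas}). The correctness of the three transitions (A), (B), (C) is inherited verbatim from the proof preceding Lemma~\ref{lm:dp}; only the bookkeeping changes.

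Next I would track the cumulative loss of approximation factor. The transformation used to obtain property P1 costs at most a factor $(1-\epsilon)$ of $\OPT$ (from rounding upper bounds to powers of $1+\epsilon$), the split-and-recombine step in Corollary~\ref{cor:simplification} costs a factor $(\frac{1}{2}-O(\epsilon))$, and the profit-discretization step costs another $(1-\epsilon)$, exactly as in the lemma in Section~\ref{sec:fptas} (the key inequality $\sum_t I_t p_t s_t \geq \sum_t I^\star_t p_t s_t - \epsilon \tilde F$ goes through unchanged). Composing these losses and rescaling $\epsilon$ by a constant factor absorbs them all into a single $(\frac{1}{2}-\epsilon)$ guarantee.

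The main obstacle I expect is verifying that after the profit discretization, the two structural facts used in the correctness proof of Lemma~\ref{lm:dp}---that among $\sfA^t_T$ and $\sfB^t_T$ one should always prefer the most profitable surviving class---still hold when ``most profitable'' is measured by the rounded profits $\tilde Q_t$ rather than by $p_t$. Since discretization can collapse distinct $p_t$ values onto the same $\tilde Q_t$, I would handle this by breaking ties using the original $p_t$ (an arbitrary but fixed tie-breaker, as the paper already allows) so that the greedy ``use the cheapest class in $A_t$ or $B_t$'' argument from the proof of Lemma~\ref{lm:dp} continues to be valid. Once this is in place, the running time is bounded by $\poly(\dealnum, \slotnum, \log\Nsum, 1/\epsilon)$ since all state dimensions other than $i$ and $T$ are now polynomial in $\dealnum/\epsilon$, and $T$ itself ranges over only the $O(\log_{1+\epsilon}\Nsum)$ relevant discretized values inherited from P1, yielding the claimed bound and completing the proof.
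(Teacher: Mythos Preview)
Your proposal follows the same route as the paper: combine Corollary~\ref{cor:simplification} and Lemma~\ref{lm:dp} to obtain a pseudo-polynomial $(\tfrac12-\epsilon)$-approximation, and then remove the dependence on $\Nsum$ by replacing the size coordinate with a discretized-profit coordinate, exactly as in Section~\ref{sec:fptas}. The paper's own proof says precisely this and omits the details you supply; your handling of the error accounting and of the tie-breaking concern is fine.

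There is, however, one gap in your last step. Property P1 only asserts that for each fixed $T$ the classes in $\sfM_T$ have $O(1)$ distinct upper bounds; it does \emph{not} discretize $T$. In the DP of Lemma~\ref{lm:dp}, case~(C) decrements $T$ by one, and in cases~(A)/(B) the chosen class is assigned exactly $T$ items, so $T$ genuinely ranges over all integers up to $\max_i N_i$. Swapping $y$ for a rounded-profit coordinate therefore still leaves a $\Theta(\Nsum)$-sized table, contrary to your claim that ``$T$ itself ranges over only the $O(\log_{1+\epsilon}\Nsum)$ relevant discretized values inherited from P1.'' The fix is straightforward and in the same spirit as the treatment of $\alloc_k$ via the set $D_k$ in Section~\ref{sec:fptas}: restrict $T$ to values of the form $\lfloor (1+\epsilon)^h\rfloor$ together with the boundary values $l_i$; this reduces the number of relevant $T$'s to $O(\dealnum+\log_{1+\epsilon}\Nsum)$ at the cost of one more $(1-\epsilon)$ factor, which is absorbed after rescaling $\epsilon$. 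With that amendment your argument is complete and coincides with the paper's.
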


\end{document}